\documentclass[twocolumn,10pt]{IEEEtran}

\usepackage{amsmath,amssymb,amsthm}
\usepackage{graphicx}
\usepackage{xcolor}
\usepackage[acronym]{glossaries}

\newacronym{5g}{5G}{fifth-generation}
\newacronym{6g}{6G}{sixth-generation}
\newacronym{adc}{ADC}{analog-to-digital converter}
\newacronym{dac}{DAC}{digital-to-analog converter}
\newacronym{dma}{DMA}{dynamic metasurface antenna}
\newacronym{em}{EM}{electromagnetic}
\newacronym{lmmse}{LMMSE}{linear minimum mean square error}
\newacronym{milac}{MiLAC}{microwave linear analog computer}
\newacronym{mimo}{MIMO}{multiple-input multiple-output}
\newacronym{miso}{MISO}{multiple-input single-output}
\newacronym{mrt}{MRT}{maximum ratio transmission}
\newacronym{rf}{RF}{radio frequency}
\newacronym{ris}{RIS}{reconfigurable intelligent surface}
\newacronym{upa}{UPA}{uniform planar array}

\newtheorem{proposition}{Proposition}

\begin{document}
\bstctlcite{BSTcontrol}

\title{Physics-Compliant Modeling and\\Optimization of MIMO Systems Aided by\\Microwave Linear Analog Computers}

\author{Matteo~Nerini,~\IEEEmembership{Member,~IEEE},
        Bruno~Clerckx,~\IEEEmembership{Fellow,~IEEE}

\thanks{This work has been supported in part by UKRI under Grant EP/Y004086/1, EP/X040569/1, EP/Y037197/1, EP/X04047X/1, EP/Y037243/1.}
\thanks{Matteo Nerini and Bruno Clerckx are with the Department of Electrical and Electronic Engineering, Imperial College London, SW7 2AZ London, U.K. (e-mail: m.nerini20@imperial.ac.uk; b.clerckx@imperial.ac.uk).}}

\maketitle

\begin{abstract}
\Gls{milac} has emerged as a promising architecture for implementing linear \gls{mimo} processing in the analog domain, with \gls{rf} signals.
Existing studies on \gls{milac}-aided communications rely on idealized channel models and neglect antenna mutual coupling.
However, since \gls{milac} performs processing at \gls{rf}, mutual coupling becomes critical and alters the implemented operation, not only the channel characteristics.
In this paper, we develop a physics-compliant model for \gls{milac}-aided \gls{mimo} systems accounting for mutual coupling with multiport network theory.
We derive end-to-end system models for scenarios with \glspl{milac} at the transmitter, the receiver, or both, showing how mutual coupling impacts the linear transformation implemented by the \glspl{milac}.
Furthermore, we formulate and solve a mutual coupling aware \gls{milac} optimization problem, deriving a closed-form globally optimal solution that maximizes the received signal power.
We establish the fundamental performance limits of \gls{milac} with mutual coupling, and derive three analytical results.
First, mutual coupling is beneficial in \gls{milac}-aided systems, on average.
Second, with mutual coupling, \gls{milac} performs as digital architectures equipped with a matching network, while having fewer \gls{rf} chains.
Third, with mutual coupling, \gls{milac} always outperforms digital architectures with no matching network.
Numerical simulations confirm our theoretical findings.
\end{abstract}

\begin{IEEEkeywords}
Beamforming, gigantic multiple-input multiple-output (MIMO), microwave linear analog computer (MiLAC), mutual coupling
\end{IEEEkeywords}

\section{Introduction}

The unprecedented growth of mobile data traffic and the requirements anticipated for \gls{6g} wireless networks demand transformative shifts in wireless system design \cite{gio20}.
Among the key technologies enabling these capabilities, gigantic \gls{mimo} has emerged, exploiting large antenna arrays to increase spectral efficiency and spatial multiplexing gains \cite{lar14}.
Whereas \gls{5g} systems employ massive \gls{mimo} arrays with tens of antennas, \gls{6g} visions push toward gigantic \gls{mimo}, involving hundreds or even thousands of antennas \cite{bjo25}.
However, scaling conventional digital \gls{mimo} architectures to such dimensions introduces prohibitive hardware costs and computational burdens.
Each antenna typically requires a dedicated \gls{rf} chain with high-resolution \glspl{adc} and \glspl{dac}, and real-time digital signal processing whose complexity scales rapidly with array size.

To address these challenges, researchers have explored a range of alternative architectures that reduce reliance on digital processing.
Hybrid analog-digital beamforming distributes signal processing between baseband digital and \gls{rf} analog domains, lowering the number of \gls{rf} chains while retaining performance \cite{soh16,mol17}.
Similarly, technologies such as \glspl{ris} and \glspl{dma} steer or shape \gls{em} waves in the analog domain \cite{wu21,shl19}.
Parallel to these developments, there has been a revival of analog computing paradigms for communications, motivated by their potential for ultra-fast and highly parallel computation.
The resurgence is driven by advances in analog hardware, including metasurfaces and resistive memory arrays that allow certain operations to be mapped directly onto physical analog structures \cite{sil14,iel18}.

In this context, \gls{milac} has recently been proposed as a class of analog computers capable of processing microwave signals directly in the analog domain \cite{ner25-1}.
A \gls{milac} is a multiport microwave network made of (possibly tunable) impedance elements that linearly map input signals to outputs, effectively implementing linear transformations and beamforming tasks without extensive digital processing.
Early studies showed that a \gls{milac} can compute \gls{lmmse} estimators and matrix inversions with computational complexity scaling quadratically rather than cubically, enabling orders-of-magnitude reductions compared to digital computing \cite{ner25-1}.
Building on this foundation, MiLAC-aided beamforming has been introduced as a way of performing precoding and combining entirely at \gls{rf}, in the analog domain \cite{ner25-2}.
MiLAC-aided beamforming can achieve flexibility and performance comparable to digital beamforming while requiring significantly fewer \gls{rf} chains, lower \glspl{adc}/\glspl{dac} resolution, and reduced digital processing.

Subsequent work has further studied \gls{milac}-aided communications.
The capacity of \gls{milac}-aided \gls{mimo} systems has been characterized under the constraints of lossless and reciprocal \gls{rf} components, and analytically shown to match that of digital beamforming systems \cite{ner25-3}.
Efficient \gls{milac} architectures with reduced circuit complexity have been developed by using graph theory, addressing practical implementation challenges associated with fully-connected microwave networks \cite{ner25-4}.
The performance limits of \gls{milac} in multi-user systems have been investigated \cite{fan26,wu26}.
While a lossless and reciprocal \gls{milac} cannot exactly match the performance of digital beamforming \cite{fan26}, hybrid \gls{milac}-digital beamforming can achieve maximum performance with only $K$ \gls{rf} chains, with $K$ being the number of users \cite{wu26}, in contrast to hybrid analog-digital beamforming requiring $2K$ \gls{rf} chains \cite{soh16}.
Moreover, the problem of low-complexity channel estimation in \gls{milac}-aided systems has been addressed in \cite{zha26}.

Existing work on \gls{milac}-aided communications has largely relied on idealized channel models \cite{ner25-1}-\cite{zha26}.
In particular, prior studies assume perfectly matched antennas and neglect antenna mutual coupling, which is particularly strong in tightly packed antenna arrays.
While these assumptions enable analytical tractability and provide valuable insights into the fundamental potential of \gls{milac}, they neglect \gls{em} non-idealities, such as mutual coupling, that inevitably arise in practical implementations of multi-antenna devices.

Mutual coupling between antenna elements arises when the \gls{em} signal radiated by an antenna induces currents on neighboring antennas, thereby changing the \gls{em} signals they radiate.
These effects are known from classical \gls{mimo} theory to influence the channel characteristics, introducing correlation between antenna ports and affecting the achievable performance.
Mutual coupling has been extensively studied and modeled in conventional digital \gls{mimo} systems using multiport network theory, either through impedance parameters \cite{cle07,ivr10,ivr14}, or scattering parameters \cite{wal04-1,wal04-2}.
Recent works also explored the role of mutual coupling in \gls{ris} \cite{gra21,she22,abr24,ner24} and \gls{dma} \cite{wil22}.
However, the role of mutual coupling in \gls{milac}-aided systems is fundamentally different and potentially more critical.
In \gls{milac}-aided systems, beamforming is performed directly in the \gls{em} domain through a reconfigurable microwave network, rather than being carried out in baseband.
Consequently, mutual coupling does not merely perturb the channel, but directly interacts with the analog processing of the \gls{milac}, influencing the implemented linear operator and the resulting end-to-end system behavior.

This observation motivates the need for a rigorous and physically grounded modeling for \gls{milac}-aided systems that explicitly accounts for mutual coupling effects.
Understanding how coupling modifies the effective \gls{milac} operation and impacts beamforming is essential for assessing performance limits and guiding practical design.
In this paper, we fill this gap by developing a comprehensive model of \gls{milac}-aided \gls{mimo} systems in the presence of antenna mutual coupling, and proposing a mutual coupling aware optimization algorithm for \gls{milac}.
Specifically, the contributions of this paper are as follows.

\textit{First}, we develop a rigorous physics-compliant model for \gls{mimo} systems aided by \gls{milac} through multiport network theory, explicitly accounting for antenna mutual coupling.
We derive end-to-end system models for three cases where:
\textit{i)} only the transmitter is equipped with a \gls{milac} (Section~\ref{sec:tx}),
\textit{ii)} only the receiver is equipped with a \gls{milac} (Section~\ref{sec:rx}), and
\textit{iii)} both the transmitter and the receiver are equipped with a \gls{milac} (Section~\ref{sec:both}).
For each of these cases, we show how mutual coupling alters the impact of \gls{milac} on the wireless channel.

\textit{Second}, we formulate and solve an optimization problem for \gls{milac} in the presence of antenna mutual coupling (Section~\ref{sec:opt}).
Considering a \gls{miso} system with \gls{milac} at the transmitter, we derive a global optimal solution that maximizes the received signal power.
Furthermore, we obtain closed-form expressions for the maximum achievable performance and for its average in the presence of uncorrelated fading channels.
Analytical results show that \gls{milac} with mutual coupling achieves better average performance than with no mutual coupling.

\textit{Third}, we compare \gls{milac}-aided beamforming with digital beamforming by examining two possible digital architectures: one equipped with a matching network (or decoupling network) and one without it (Section~\ref{sec:comparison}).
Comparing the optimal performance of \gls{milac}-aided beamforming and digital beamforming, we derive the following two analytical results:
\textit{i)} \gls{milac} achieves the same performance as digital beamforming with a matching network, for any channel realization.
\textit{ii)} in the presence of mutual coupling, \gls{milac} achieves better performance than digital beamforming without a matching network, for any channel realization.

\textit{Fourth}, we validate the presented modeling and optimization algorithm through numerical simulations (Section~\ref{sec:results}).
Numerical results confirm that the proposed \gls{milac} optimization algorithm achieves the derived performance upper bounds, demonstrating its global optimality.
We study the impact of mutual coupling on \gls{milac}-aided systems and conventional digital systems, numerically validating all our theoretical insights, namely that mutual coupling is beneficial in \gls{milac}-aided systems and that \gls{milac} outperforms digital beamforming with no matching network in the presence of mutual coupling.

\textit{Notation}:
Vectors and matrices are denoted with bold lower and bold upper letters, respectively.
Scalars are represented with letters not in bold font.
$\Re\{a\}$, $\Im\{a\}$, and $\vert a\vert$ refer to the real part, imaginary part, and absolute value of a complex scalar $a$, respectively.
$\mathbf{a}^T$, $\mathbf{a}^H$, $[\mathbf{a}]_{i}$, and $\Vert\mathbf{a}\Vert$ refer to the transpose, conjugate transpose, $i$th element, and $l_{2}$-norm of a vector $\mathbf{a}$, respectively.
$\mathbf{A}^T$, $\mathbf{A}^H$, $[\mathbf{A}]_{i,k}$, $[\mathbf{A}]_{i,:}$, and $[\mathbf{A}]_{:,k}$ refer to the transpose, conjugate transpose, $(i,k)$th element, $i$th row, and $k$th column of a matrix $\mathbf{A}$, respectively.
$[\mathbf{A}]_{\mathcal{I},\mathcal{K}}$ refers to the submatrix of $\mathbf{A}$ obtained by selecting the rows and columns indexed by the elements of the sets $\mathcal{I}$ and $\mathcal{K}$, respectively.
$\mathbb{R}$ and $\mathbb{C}$ denote the real and complex number sets, respectively.
$j=\sqrt{-1}$ denotes the imaginary unit.
$\mathbf{I}$ and $\mathbf{0}$ denote the identity matrix and the all-zero matrix with appropriate dimensions.
$\mathbf{A}\succcurlyeq\mathbf{B}$ means that $\mathbf{A}-\mathbf{B}$ is positive semi-definite.

\section{Modeling a MIMO System}
\label{sec:mimo}

To introduce the necessary theoretical tools and the quantities in the considered wireless systems, we first model a conventional digital \gls{mimo} system, where each antenna is directly connected to an \gls{rf} chain.
The transmitting \gls{rf} chains are modeled through voltage generators with their series impedance $Z_0$, e.g., $Z_0=50~\Omega$, while the receiving \gls{rf} chains are modeled as loads $Z_0$, assuming that their input impedances are perfectly matched.
Such a \gls{mimo} system is represented in Fig.~\ref{fig:mimo}, where we have $N_T$ transmitting antennas and $N_R$ receiving antennas.

We denote the voltages at the voltage generators as $\mathbf{s}\in\mathbb{C}^{N_T\times1}$, the voltages and currents at the transmitting antennas as $\mathbf{x}\in\mathbb{C}^{N_T\times1}$ and $\mathbf{i}_x\in\mathbb{C}^{N_T\times1}$, and the voltages and currents at the receiver as $\mathbf{z}\in\mathbb{C}^{N_R\times1}$ and $\mathbf{i}_z\in\mathbb{C}^{N_R\times1}$, as in Fig.~\ref{fig:mimo}.
The received signal $\mathbf{z}$ is related to the transmitted signal $\mathbf{s}$ by
\begin{equation}
\mathbf{z}=\mathbf{H}\mathbf{s},
\end{equation}
where $\mathbf{H}\in\mathbb{C}^{N_R\times N_T}$ is the wireless channel matrix between the transmitter and receiver.
In the remainder of this section, our goal is to characterize $\mathbf{H}$ through rigorous multiport network analysis, recalling what was done in classical \gls{mimo} literature \cite{cle07,ivr10,ivr14,wal04-1,wal04-2}.

\begin{figure}[t]
\centering
\includegraphics[width=0.48\textwidth]{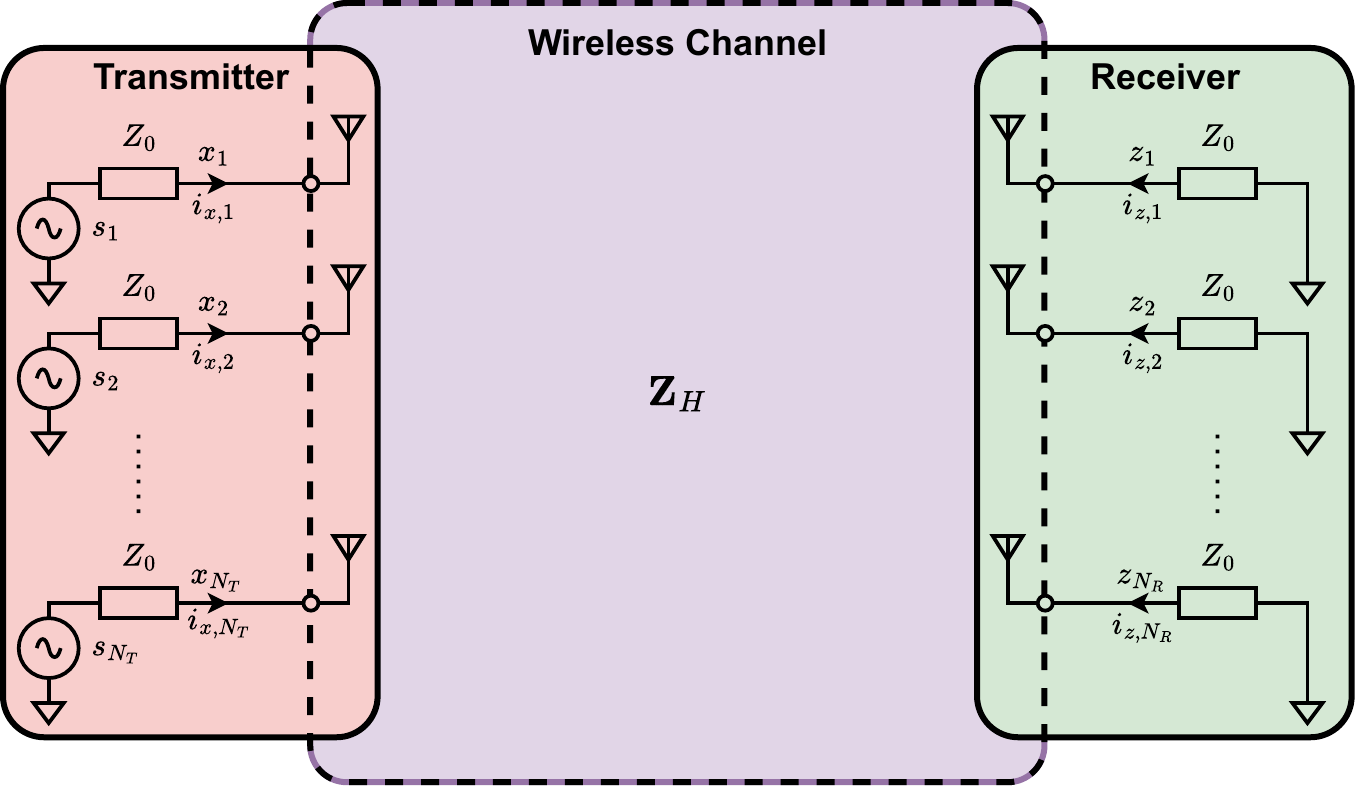}
\caption{Model of a MIMO system.}
\label{fig:mimo}
\end{figure}

\subsection{Model With Mutual Coupling}

To derive $\mathbf{H}$, we first study the relationships between the electrical quantities at the wireless channel, the transmitter, and the receiver, and then solve a system of linear equations involving these relationships.
We model the wireless channel as a giant $(N_T+N_R)$-port network, which can be described by its impedance matrix $\mathbf{Z}_H\in\mathbb{C}^{(N_T+N_R)\times(N_T+N_R)}$ \cite[Chapter~4]{poz11}.
This matrix can be conveniently partitioned as 
\begin{equation}
\mathbf{Z}_H=
\begin{bmatrix}
\mathbf{Z}_{TT} & \mathbf{Z}_{TR}\\
\mathbf{Z}_{RT} & \mathbf{Z}_{RR}
\end{bmatrix},\label{eq:ZH}
\end{equation}
where $\mathbf{Z}_{TT}\in\mathbb{C}^{N_T\times N_T}$ and $\mathbf{Z}_{RR}\in\mathbb{C}^{N_R\times N_R}$ are the impedance matrices at the transmitting and receiving antenna arrays.
The off-diagonal entries of $\mathbf{Z}_{TT}$ and $\mathbf{Z}_{RR}$ are the antenna mutual coupling effects, and the diagonal entries of $\mathbf{Z}_{TT}$ and $\mathbf{Z}_{RR}$ are the antenna self-impedances.
Besides, $\mathbf{Z}_{RT}$ is the transmission impedance matrix from the transmitter to the receiver, and $\mathbf{Z}_{TR}=\mathbf{Z}_{RT}^T$ for the reciprocity of the channel.
Following the definition of impedance matrix \cite[Chapter~4]{poz11}, we have
\begin{equation}
\begin{bmatrix}
\mathbf{x}\\
\mathbf{z}
\end{bmatrix}=
\begin{bmatrix}
\mathbf{Z}_{TT} & \mathbf{Z}_{TR}\\
\mathbf{Z}_{RT} & \mathbf{Z}_{RR}
\end{bmatrix}
\begin{bmatrix}
\mathbf{i}_x\\
\mathbf{i}_z
\end{bmatrix},\label{eq:vZHi-mimo}
\end{equation}
where the directions of the currents are shown in Fig.~\ref{fig:mimo}.
At the transmitter, $\mathbf{x}$ and $\mathbf{i}_x$ are related to the source vector $\mathbf{s}$ by
\begin{equation}
\mathbf{x}=\mathbf{s}-Z_0\mathbf{i}_x,\label{eq:s-mimo}
\end{equation}
which is obtained by applying Ohm's law at all the transmitting antennas, and where the minus sign is in agreement with the current directions in Fig.~\ref{fig:mimo}.
At the receiver, $\mathbf{z}$ and $\mathbf{i}_z$ are related by
\begin{equation}
\mathbf{z}=-Z_0\mathbf{i}_z,\label{eq:y-mimo}
\end{equation}
following Ohm's law.

The channel $\mathbf{H}$ can now be derived by solving the system of the three equations \eqref{eq:vZHi-mimo}, \eqref{eq:s-mimo}, and \eqref{eq:y-mimo}, which is compactly written as
\begin{equation}
\begin{cases}
\mathbf{v}=\mathbf{Z}_H\mathbf{i},\\
\mathbf{v}=\bar{\mathbf{s}}-Z_0\mathbf{i},
\end{cases}\label{eq:sys-mimo-1}
\end{equation}
where we introduced $\mathbf{v}\in\mathbb{C}^{(N_T+N_R)\times1}$, $\mathbf{i}\in\mathbb{C}^{(N_T+N_R)\times1}$, and $\bar{\mathbf{s}}\in\mathbb{C}^{(N_T+N_R)\times1}$ as
\begin{gather}
\mathbf{v}=
\begin{bmatrix}
\mathbf{x}\\
\mathbf{z}
\end{bmatrix},\;
\mathbf{i}=
\begin{bmatrix}
\mathbf{i}_x\\
\mathbf{i}_z
\end{bmatrix},\;
\bar{\mathbf{s}}=
\begin{bmatrix}
\mathbf{s}\\
\mathbf{0}
\end{bmatrix}.
\end{gather}
System \eqref{eq:sys-mimo-1} gives $\mathbf{v}=\mathbf{Z}_H\left(\mathbf{Z}_H+Z_0\mathbf{I}\right)^{-1}\bar{\mathbf{s}}$.
Thus, by introducing $\mathbf{A}\in\mathbb{C}^{(N_T+N_R)\times(N_T+N_R)}$ as
\begin{equation}
\mathbf{A}=\mathbf{Z}_H\left(\mathbf{Z}_H+Z_0\mathbf{I}\right)^{-1},\label{eq:A-mimo-1}
\end{equation}
partitioned as
\begin{equation}
\mathbf{A}=
\begin{bmatrix}
\mathbf{A}_{11} & \mathbf{A}_{12}\\
\mathbf{A}_{21} & \mathbf{A}_{22}
\end{bmatrix},\label{eq:A-mimo-partition}
\end{equation}
with $\mathbf{A}_{11}\in\mathbb{C}^{N_T\times N_T}$, $\mathbf{A}_{22}\in\mathbb{C}^{N_R\times N_R}$, we have $\mathbf{z}=\mathbf{A}_{21}\mathbf{s}$.
By comparing the relationship $\mathbf{z}=\mathbf{A}_{21}\mathbf{s}$ with $\mathbf{z}=\mathbf{H}\mathbf{s}$, we notice that $\mathbf{H}=\mathbf{A}_{21}$, which is derived in the following.

To simplify the derivation, we consider the unilateral approximation \cite{ivr10}, i.e., we assume that the electrical properties at the transmitter are independent of the electrical properties at the receiver.
This assumption accurately reflects what happens in common wireless systems and allows us to neglect the feedback channel by setting it as $\mathbf{Z}_{TR}=\mathbf{0}$.
Expanding $\mathbf{Z}_H$ in \eqref{eq:A-mimo-1}, we can therefore write $\mathbf{A}$ as
\begin{equation}
\mathbf{A}=
\begin{bmatrix}
\mathbf{Z}_{TT} & \mathbf{0}\\
\mathbf{Z}_{RT} & \mathbf{Z}_{RR}
\end{bmatrix}
\begin{bmatrix}
\mathbf{Z}_{TT}+Z_0\mathbf{I} & \mathbf{0}\\
\mathbf{Z}_{RT} & \mathbf{Z}_{RR}+Z_0\mathbf{I}
\end{bmatrix}^{-1}.\label{eq:A-mimo-2}
\end{equation}
By carrying out the matrix inverse in \eqref{eq:A-mimo-2} (note that the matrix to be inverted is a lower triangular $2\times 2$ block matrix) and the matrix product, we obtain
\begin{equation}
\mathbf{H}
=\mathbf{A}_{21}
=Z_0\left(\mathbf{Z}_{RR}+Z_0\mathbf{I}\right)^{-1}\mathbf{Z}_{RT}\left(\mathbf{Z}_{TT}+Z_0\mathbf{I}\right)^{-1},\label{eq:H-mimo}
\end{equation}
giving our desired wireless channel matrix, in agreement with \cite[Section~II]{cle07}.

\subsection{Model With No Mutual Coupling}

The obtained channel model can be simplified by assuming that all antennas are perfectly matched to $Z_0$ and neglecting the mutual coupling effects, which is realistic when the antenna spacing is at least half-wavelength.
First, when all antennas at the transmitter and receiver are perfectly matched to $Z_0$, namely their self-impedances are all $Z_0$, the diagonal entries of $\mathbf{Z}_{TT}$ and $\mathbf{Z}_{RR}$ are all equal to $Z_0$.
Second, with no mutual coupling, the off-diagonal entries of $\mathbf{Z}_{TT}$ and $\mathbf{Z}_{RR}$ are all zero.
Therefore, under these two assumptions, we have $\mathbf{Z}_{TT}=Z_0\mathbf{I}$ and $\mathbf{Z}_{RR}=Z_0\mathbf{I}$, which simplify the channel model in \eqref{eq:H-mimo} as $\mathbf{H}=\mathbf{Z}_{RT}/(4Z_0)$,
purely depending on the transmission impedance matrix $\mathbf{Z}_{RT}$.

\section{Modeling a MIMO System with MiLAC\\at the Transmitter}
\label{sec:tx}

\begin{figure}[t]
\centering
\includegraphics[width=0.48\textwidth]{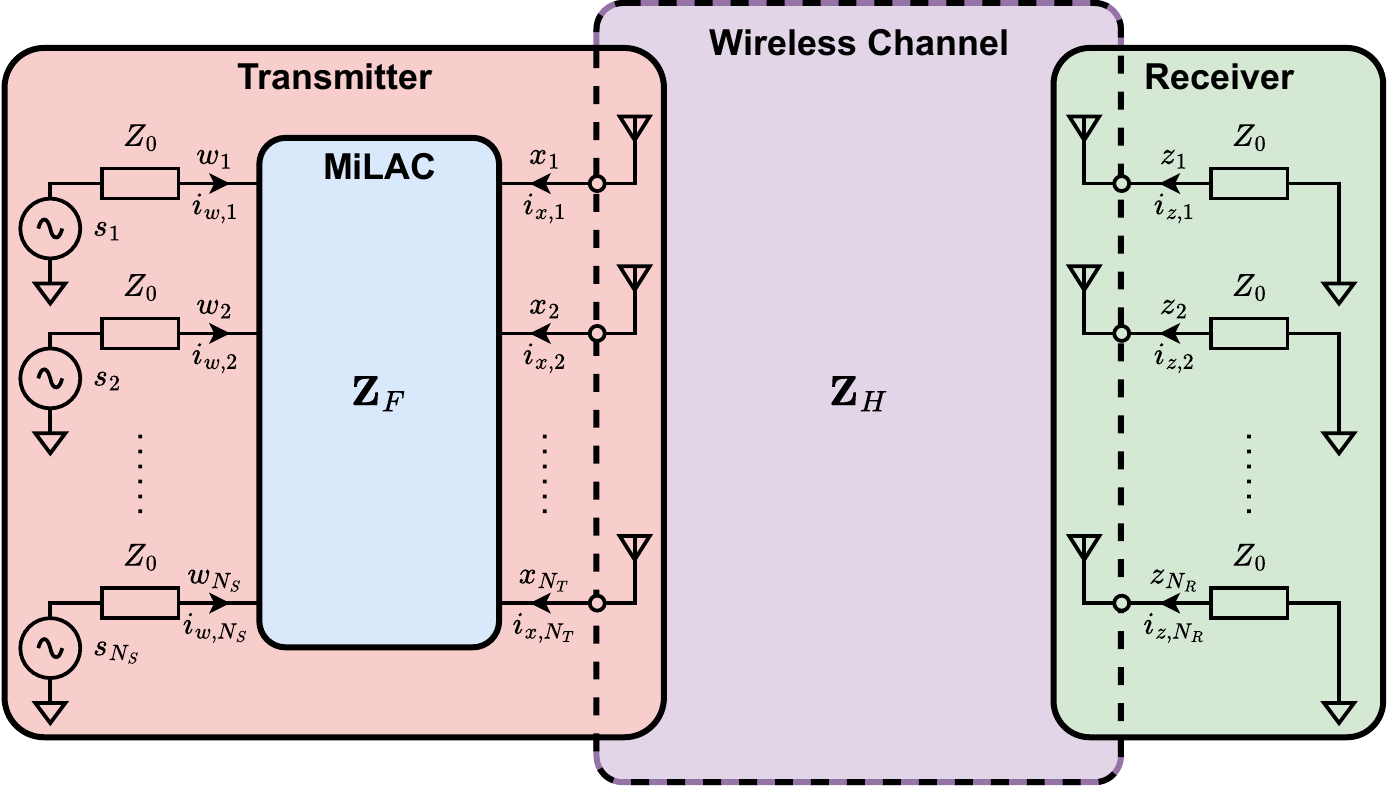}
\caption{Model of a MIMO system with MiLAC at the transmitter.}
\label{fig:tx}
\end{figure}

We have analyzed a conventional digital \gls{mimo} system with multiport network analysis.
In this section, we show how to apply the same tools to analyze a \gls{mimo} system where the transmitter is equipped with a \gls{milac}, as shown in Fig.~\ref{fig:tx}.

At the transmitter, we denote the voltages at the generators as $\mathbf{s}\in\mathbb{C}^{N_S\times1}$, where $N_S$ is the number of \gls{rf} chains, the voltages and currents at the input of the \gls{milac} as $\mathbf{w}\in\mathbb{C}^{N_S\times1}$ and $\mathbf{i}_w\in\mathbb{C}^{N_S\times1}$, and the voltages and currents at the transmitting antennas as $\mathbf{x}\in\mathbb{C}^{N_T\times1}$ and $\mathbf{i}_x\in\mathbb{C}^{N_T\times1}$, where $N_T$ is the number of transmitting antennas.
At the receiver, we denote the voltages and currents as $\mathbf{z}\in\mathbb{C}^{N_R\times1}$ and $\mathbf{i}_z\in\mathbb{C}^{N_R\times1}$, where $N_R$ is the number of receiving antennas (which is the same as the number of \gls{rf} chains at the receiver since it operates digital beamforming).
The received signal $\mathbf{z}$ is given as a function of the transmitted signal $\mathbf{s}$ by
\begin{equation}
\mathbf{z}=\mathbf{H}\mathbf{F}\mathbf{s},
\end{equation}
where $\mathbf{H}\in\mathbb{C}^{N_R\times N_T}$ is the wireless channel matrix and $\mathbf{F}\in\mathbb{C}^{N_T\times N_S}$ is the precoding matrix implemented by the \gls{milac}.
In the following, we derive accurate expressions for $\mathbf{H}$ and $\mathbf{F}$ in the presence of mutual coupling at both the transmitter and receiver.

\subsection{Model With Mutual Coupling}

As it was done for the digital \gls{mimo} system in Section~\ref{sec:mimo}, we first derive the relationships between the signals at the wireless channel, the transmitter, and the receiver, and then consider them jointly to derive the system model.
The wireless channel can be seen as a $(N_T+N_R)$-port network with impedance matrix $\mathbf{Z}_H\in\mathbb{C}^{(N_T+N_R)\times(N_T+N_R)}$, partitioned as in \eqref{eq:ZH}.
Therefore, according to the definition of impedance matrix \cite[Chapter~4]{poz11}, the voltages and currents at the transmitting antennas and at the receiver are related by
\begin{equation}
\begin{bmatrix}
\mathbf{x}\\
\mathbf{z}
\end{bmatrix}=
\begin{bmatrix}
\mathbf{Z}_{TT} & \mathbf{Z}_{TR}\\
\mathbf{Z}_{RT} & \mathbf{Z}_{RR}
\end{bmatrix}
\begin{bmatrix}
-\mathbf{i}_x\\
\mathbf{i}_z
\end{bmatrix},\label{eq:vZHi-tx}
\end{equation}
where the directions of the currents $\mathbf{i}_x$ and $\mathbf{i}_z$ are illustrated in Fig.~\ref{fig:tx}.
At the transmitter, $\mathbf{w}$ and $\mathbf{i}_w$ are related to the source vector $\mathbf{s}$ by
\begin{equation}
\mathbf{w}=\mathbf{s}-Z_0\mathbf{i}_w,\label{eq:s-tx}
\end{equation}
following Ohm's law.
In addition, introducing the impedance matrix of the \gls{milac} as $\mathbf{Z}_F\in\mathbb{C}^{(N_S+N_T)\times(N_S+N_T)}$, partitioned as
\begin{equation}
\mathbf{Z}_F=
\begin{bmatrix}
\mathbf{Z}_{F,11} & \mathbf{Z}_{F,12}\\
\mathbf{Z}_{F,21} & \mathbf{Z}_{F,22}
\end{bmatrix},\label{eq:ZF}
\end{equation}
where $\mathbf{Z}_{F,11}\in\mathbb{C}^{N_S\times N_S}$ and $\mathbf{Z}_{F,22}\in\mathbb{C}^{N_T\times N_T}$, we have
\begin{equation}
\begin{bmatrix}
\mathbf{w}\\
\mathbf{x}
\end{bmatrix}
=
\begin{bmatrix}
\mathbf{Z}_{F,11} & \mathbf{Z}_{F,12}\\
\mathbf{Z}_{F,21} & \mathbf{Z}_{F,22}
\end{bmatrix}
\begin{bmatrix}
\mathbf{i}_w\\
\mathbf{i}_x
\end{bmatrix},\label{eq:vZFi}
\end{equation}
by the definition of impedance matrix \cite[Chapter~4]{poz11}.
Finally, at the receiver, we have
\begin{equation}
\mathbf{z}=-Z_0\mathbf{i}_z,\label{eq:y-tx}
\end{equation}
as for the conventional \gls{mimo} system in Section~\ref{sec:mimo}.

To derive the channel $\mathbf{H}$ and the precoder $\mathbf{F}$, we need now to solve the system of the four equations \eqref{eq:vZHi-tx}, \eqref{eq:s-tx}, \eqref{eq:vZFi}, and \eqref{eq:y-tx}.
Considering the unilateral approximation \cite{ivr10}, i.e., $\mathbf{Z}_{TR}=\mathbf{0}$, this system can be compactly written as
\begin{equation}
\begin{cases}
\mathbf{v}=\mathbf{Z}\mathbf{i},\\
\mathbf{v}=\bar{\mathbf{s}}-\overline{\mathbf{Z}}\mathbf{i},
\end{cases}\label{eq:sys-tx-1}
\end{equation}
where we introduced $\mathbf{v}\in\mathbb{C}^{(N_S+N_T+N_R)\times1}$, $\mathbf{i}\in\mathbb{C}^{(N_S+N_T+N_R)\times1}$, $\mathbf{Z}\in\mathbb{C}^{(N_S+N_T+N_R)\times(N_S+N_T+N_R)}$, $\bar{\mathbf{s}}\in\mathbb{C}^{(N_S+N_T+N_R)\times1}$, and $\overline{\mathbf{Z}}\in\mathbb{C}^{(N_S+N_T+N_R)\times(N_S+N_T+N_R)}$ as
\begin{gather}
\mathbf{v}=
\begin{bmatrix}
\mathbf{w}\\
\mathbf{x}\\
\mathbf{z}
\end{bmatrix},
\mathbf{i}=
\begin{bmatrix}
\mathbf{i}_w\\
\mathbf{i}_x\\
\mathbf{i}_z
\end{bmatrix},
\mathbf{Z}=
\begin{bmatrix}
\mathbf{Z}_{F,11} & \mathbf{Z}_{F,12} & \mathbf{0}\\
\mathbf{Z}_{F,21} & \mathbf{Z}_{F,22} & \mathbf{0}\\
\mathbf{0} & -\mathbf{Z}_{RT} & \mathbf{Z}_{RR}
\end{bmatrix},\\
\bar{\mathbf{s}}=
\begin{bmatrix}
\mathbf{s}\\
\mathbf{0}\\
\mathbf{0}
\end{bmatrix},\;
\overline{\mathbf{Z}}=
\begin{bmatrix}
Z_0\mathbf{I} & \mathbf{0} & \mathbf{0}\\
\mathbf{0} & \mathbf{Z}_{TT} & \mathbf{0}\\
\mathbf{0} & \mathbf{0} & Z_0\mathbf{I}
\end{bmatrix}.
\end{gather}
System \eqref{eq:sys-tx-1} gives $\mathbf{v}=\mathbf{Z}\left(\mathbf{Z}+\overline{\mathbf{Z}}\right)^{-1}\bar{\mathbf{s}}$.
Thus, by introducing $\mathbf{A}\in\mathbb{C}^{(N_S+N_T+N_R)\times(N_S+N_T+N_R)}$ as
\begin{equation}
\mathbf{A}=\mathbf{Z}\left(\mathbf{Z}+\overline{\mathbf{Z}}\right)^{-1},
\end{equation}
partitioned as
\begin{equation}
\mathbf{A}=
\begin{bmatrix}
\mathbf{A}_{11} & \mathbf{A}_{12} & \mathbf{A}_{13}\\
\mathbf{A}_{21} & \mathbf{A}_{22} & \mathbf{A}_{23}\\
\mathbf{A}_{31} & \mathbf{A}_{32} & \mathbf{A}_{33}
\end{bmatrix},\label{eq:A-partition}
\end{equation}
with $\mathbf{A}_{11}\in\mathbb{C}^{N_S\times N_S}$, $\mathbf{A}_{22}\in\mathbb{C}^{N_T\times N_T}$, and $\mathbf{A}_{33}\in\mathbb{C}^{N_R\times N_R}$, we have $\mathbf{z}=\mathbf{A}_{31}\mathbf{s}$.
By comparing the relationship $\mathbf{z}=\mathbf{A}_{31}\mathbf{s}$ with $\mathbf{z}=\mathbf{H}\mathbf{F}\mathbf{s}$, we notice that $\mathbf{H}\mathbf{F}=\mathbf{A}_{31}$.

To obtain the expression of $\mathbf{A}_{31}$, we introduce $\widetilde{\mathbf{Z}}_{RT}\in\mathbb{C}^{N_R\times(N_S+N_T)}$ and $\widetilde{\mathbf{Z}}_{TT}\in\mathbb{C}^{(N_S+N_T)\times(N_S+N_T)}$ as
\begin{equation}
\widetilde{\mathbf{Z}}_{RT}=
\begin{bmatrix}
\mathbf{0} & \mathbf{Z}_{RT}
\end{bmatrix},\;
\widetilde{\mathbf{Z}}_{TT}=
\begin{bmatrix}
Z_0\mathbf{I} & \mathbf{0}\\
\mathbf{0} & \mathbf{Z}_{TT}
\end{bmatrix},\label{eq:tilde-tx}
\end{equation}
such that $\mathbf{A}=\mathbf{Z}\left(\mathbf{Z}+\overline{\mathbf{Z}}\right)^{-1}$ can be rewritten as
\begin{equation}
\mathbf{A}=
\begin{bmatrix}
\mathbf{Z}_F & \mathbf{0}\\
-\widetilde{\mathbf{Z}}_{RT} & \mathbf{Z}_{RR}
\end{bmatrix}
\begin{bmatrix}
\mathbf{Z}_F+\widetilde{\mathbf{Z}}_{TT} & \mathbf{0}\\
-\widetilde{\mathbf{Z}}_{RT} & \mathbf{Z}_{RR}+Z_0\mathbf{I}
\end{bmatrix}^{-1}.\label{eq:A-tx}
\end{equation}
By carrying out the matrix inverse in \eqref{eq:A-tx} (note that the matrix to be inverted is a lower triangular $2\times 2$ block matrix) and the matrix product, we obtain
\begin{multline}
\begin{bmatrix}
\mathbf{A}_{31} & \mathbf{A}_{32}
\end{bmatrix}
=\\
-Z_0\left(\mathbf{Z}_{RR}+Z_0\mathbf{I}\right)^{-1}\widetilde{\mathbf{Z}}_{RT}\left(\mathbf{Z}_F+\widetilde{\mathbf{Z}}_{TT}\right)^{-1}.\label{eq:A31-32}
\end{multline}

By substituting \eqref{eq:tilde-tx} into \eqref{eq:A31-32}, and carrying out the necessary computations, we obtain
\begin{multline}
\mathbf{A}_{31}
=Z_0\left(\mathbf{Z}_{RR}+Z_0\mathbf{I}\right)^{-1}\mathbf{Z}_{RT}\mathbf{Z}_{TT}^{-1}\\
\times\left[\left(\frac{\mathbf{Y}_F}{Y_0}+
\begin{bmatrix}
\mathbf{I} & \mathbf{0}\\
\mathbf{0} & \frac{\mathbf{Z}_{TT}^{-1}}{Y_0}
\end{bmatrix}
\right)^{-1}
\right]_{N_S+(1:N_T),1:N_S},
\end{multline}
where we introduced $\mathbf{Y}_F=\mathbf{Z}_F^{-1}$ as the admittance matrix of the \gls{milac} and $Y_0=Z_0^{-1}$.
Since $\mathbf{H}\mathbf{F}=\mathbf{A}_{31}$, we identify the channel matrix $\mathbf{H}$ and the \gls{milac} precoding matrix $\mathbf{F}$ as
\begin{equation}
\mathbf{H}
=Z_0\left(\mathbf{Z}_{RR}+Z_0\mathbf{I}\right)^{-1}\mathbf{Z}_{RT}\mathbf{Z}_{TT}^{-1},\label{eq:H-tx}
\end{equation}
and
\begin{equation}
\mathbf{F}
=\left[\left(\frac{\mathbf{Y}_F}{Y_0}+
\begin{bmatrix}
\mathbf{I} & \mathbf{0}\\
\mathbf{0} & \frac{\mathbf{Z}_{TT}^{-1}}{Y_0}
\end{bmatrix}\right)^{-1}\right]_{N_S+(1:N_T),1:N_S},\label{eq:F-tx}
\end{equation}
showing that $\mathbf{F}$ depends not only on the admittance matrix of the \gls{milac} $\mathbf{Y}_F$, but also on the mutual coupling matrix at the transmitter $\mathbf{Z}_{TT}$.

Note that by plugging \eqref{eq:tilde-tx} into \eqref{eq:A31-32}, we can also obtain an alternative expression of the end-to-end channel $\mathbf{A}_{31}$ following different steps, namely
\begin{equation}
\mathbf{A}_{31}=Z_0\left(\mathbf{Z}_{RR}+Z_0\mathbf{I}\right)^{-1}\mathbf{Z}_{RT}\mathbf{J}_{T}^T\left(\mathbf{Z}_{T}+Z_0\mathbf{I}\right)^{-1},
\end{equation}
where we introduced $\mathbf{Z}_{T}\in\mathbb{C}^{N_S\times N_S}$ and $\mathbf{J}_{T}\in\mathbb{C}^{N_S\times N_T}$ as
\begin{gather}
\mathbf{Z}_{T}=\mathbf{Z}_{F,11}-\mathbf{J}_{T}\mathbf{Z}_{F,21},\label{eq:ZT}\\
\mathbf{J}_{T}=\mathbf{Z}_{F,12}\left(\mathbf{Z}_{F,22}+\mathbf{Z}_{TT}\right)^{-1}.\label{eq:JT}
\end{gather}
This expression, based on the \gls{milac} impedance matrix $\mathbf{Z}_{F}$ rather than the \gls{milac} admittance matrix $\mathbf{Y}_{F}$, has been used to study the impact of matching networks on \gls{mimo} channels \cite[Section~II]{ivr10}.
Therefore, a \gls{milac} has the same impact on the channel expression as a matching network.
Nevertheless, while a matching network is fixed and it is designed to maximize the power flow from the generators to the antennas in the presence of mutual coupling, a \gls{milac} is reconfigurable and can perform beamforming depending on the channel realization.
A \gls{milac} alleviates the baseband processing needs at the cost of a reconfigurable microwave network.
In terms of representation, it is convenient to represent the effect of a \gls{milac} with its admittance matrix as in \eqref{eq:F-tx}, since it is closely related to its tunable components, as specified in \cite{ner25-1}.

\subsection{Model With No Mutual Coupling}

Assuming perfect matching and no mutual coupling at both the transmitter and receiver, the expressions of the channel matrix $\mathbf{H}$ and precoding matrix $\mathbf{F}$ can be significantly simplified.
In particular, as a consequence of perfect matching and with negligible mutual coupling, we have $\mathbf{Z}_T=Z_0\mathbf{I}$ and $\mathbf{Z}_R=Z_0\mathbf{I}$.
By substituting $\mathbf{Z}_T=Z_0\mathbf{I}$ and $\mathbf{Z}_R=Z_0\mathbf{I}$ into \eqref{eq:H-tx} and \eqref{eq:F-tx}, we obtain $\mathbf{H}=\mathbf{Z}_{RT}/(2Z_0)$
and
\begin{equation}
\mathbf{F}=\left[\left(\frac{\mathbf{Y}_F}{Y_0}+\mathbf{I}\right)^{-1}\right]_{N_S+(1:N_T),1:N_S},\label{eq:F-noMC-tx}
\end{equation}
in agreement with previous literature on \gls{milac} \cite{ner25-1,ner25-2}.

\section{Modeling a MIMO System with MiLAC\\at the Receiver}
\label{sec:rx}

\begin{figure}[t]
\centering
\includegraphics[width=0.48\textwidth]{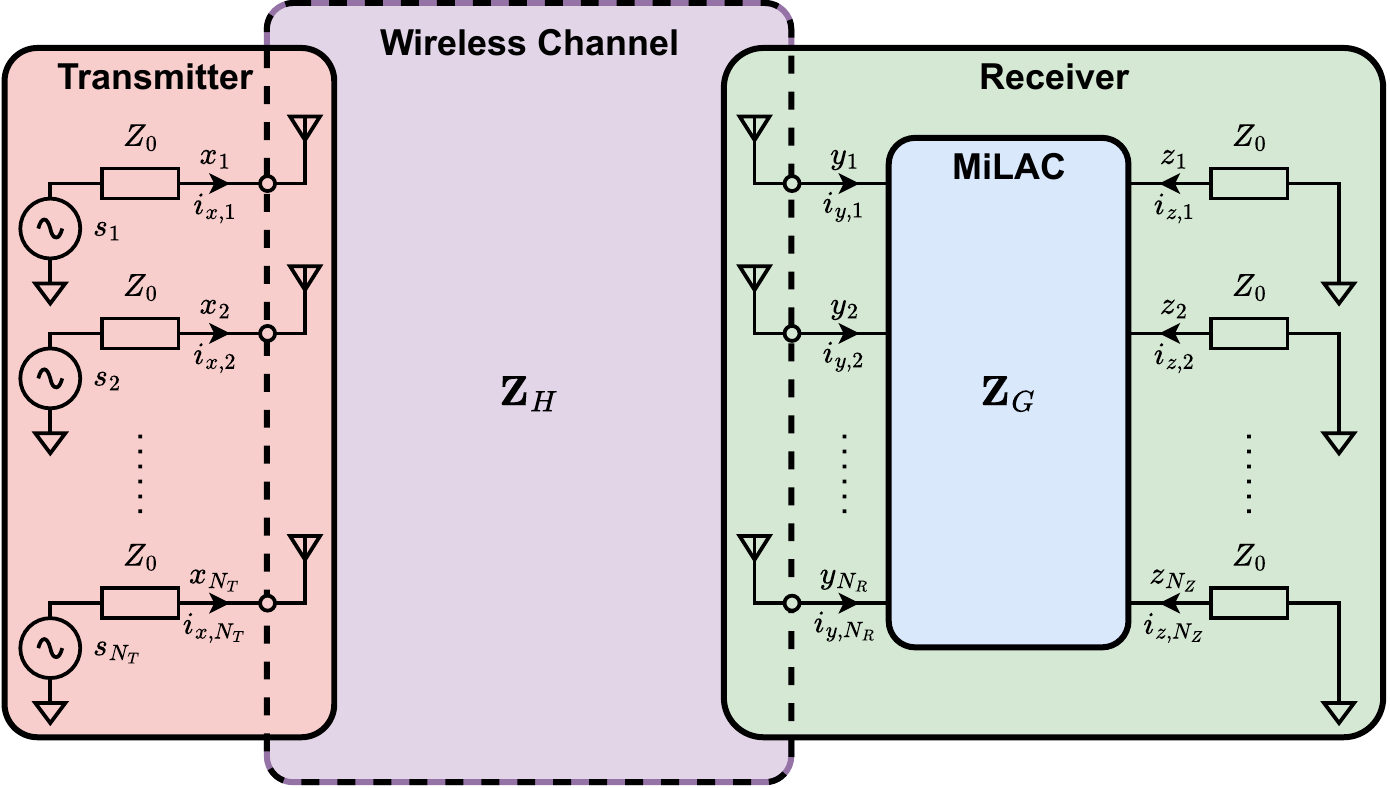}
\caption{Model of a MIMO system with MiLAC at the receiver.}
\label{fig:rx}
\end{figure}

We have analyzed a \gls{mimo} system with \gls{milac} at the transmitter, and derived the expressions of the channel and precoding matrix with and without mutual coupling effects.
In this section, we carry out the same analysis for a \gls{mimo} system where the receiver is equipped with a \gls{milac}, as shown in Fig.~\ref{fig:rx}.

At the transmitter, we denote the voltages at the voltage generators as $\mathbf{s}\in\mathbb{C}^{N_T\times1}$, and the voltages and currents at the transmitting antennas as $\mathbf{x}\in\mathbb{C}^{N_T\times1}$ and $\mathbf{i}_x\in\mathbb{C}^{N_T\times1}$, where $N_T$ is the number of antennas.
At the receiver, we denote the voltages and currents at the receiving antennas as $\mathbf{y}\in\mathbb{C}^{N_R\times1}$ and $\mathbf{i}_y\in\mathbb{C}^{N_R\times1}$, where $N_R$ is the number of antennas, and the voltages and currents at the receiving \gls{rf} chains as $\mathbf{z}\in\mathbb{C}^{N_Z\times1}$ and $\mathbf{i}_z\in\mathbb{C}^{N_Z\times1}$, where $N_Z$ the number of receiving \gls{rf} chains.
The received signal $\mathbf{z}$ is given as a function of the transmitted signal $\mathbf{s}$ by
\begin{equation}
\mathbf{z}=\mathbf{G}\mathbf{H}\mathbf{s},
\end{equation}
where $\mathbf{G}\in\mathbb{C}^{N_Z\times N_R}$ is the combining matrix implemented by the \gls{milac} and $\mathbf{H}\in\mathbb{C}^{N_R\times N_T}$ is the wireless channel matrix.
Our goal is to derive accurate expressions for $\mathbf{G}$ and $\mathbf{H}$ in the presence of mutual coupling at both the transmitter and receiver.

\subsection{Model With Mutual Coupling}

We begin by deriving the relationships between the signals at the wireless channel, the transmitter, and the receiver, and then solve a system of equations involving all these relationships.
We represent the wireless channel as a $(N_T+N_R)$-port network with impedance matrix $\mathbf{Z}_H\in\mathbb{C}^{(N_T+N_R)\times(N_T+N_R)}$, partitioned as in \eqref{eq:ZH}.
Following the definition of impedance matrix, the voltages and currents at the transmitting and receiving antennas are related by
\begin{equation}
\begin{bmatrix}
\mathbf{x}\\
\mathbf{y}
\end{bmatrix}=
\begin{bmatrix}
\mathbf{Z}_{TT} & \mathbf{Z}_{TR}\\
\mathbf{Z}_{RT} & \mathbf{Z}_{RR}
\end{bmatrix}
\begin{bmatrix}
\mathbf{i}_x\\
-\mathbf{i}_y
\end{bmatrix},\label{eq:vZHi-rx}
\end{equation}
where the direction of the currents is shown in Fig.~\ref{fig:rx}.
At the transmitter, we have
\begin{equation}
\mathbf{x}=\mathbf{s}-Z_0\mathbf{i}_x,\label{eq:s-rx}
\end{equation}
as for the conventional \gls{mimo} system in Section~\ref{sec:mimo}.
At the receiver, we introduce the impedance matrix of the \gls{milac} as $\mathbf{Z}_G\in\mathbb{C}^{(N_R+N_Z)\times(N_R+N_Z)}$, partitioned as
\begin{equation}
\mathbf{Z}_G=
\begin{bmatrix}
\mathbf{Z}_{G,11} & \mathbf{Z}_{G,12}\\
\mathbf{Z}_{G,21} & \mathbf{Z}_{G,22}
\end{bmatrix},\label{eq:ZG}
\end{equation}
where $\mathbf{Z}_{G,11}\in\mathbb{C}^{N_R\times N_R}$ and $\mathbf{Z}_{G,22}\in\mathbb{C}^{N_Z\times N_Z}$, which gives
\begin{equation}
\begin{bmatrix}
\mathbf{y}\\
\mathbf{z}
\end{bmatrix}
=
\begin{bmatrix}
\mathbf{Z}_{G,11} & \mathbf{Z}_{G,12}\\
\mathbf{Z}_{G,21} & \mathbf{Z}_{G,22}
\end{bmatrix}
\begin{bmatrix}
\mathbf{i}_y\\
\mathbf{i}_z
\end{bmatrix},\label{eq:vZGi}
\end{equation}
by the definition of impedance matrix \cite[Chapter~4]{poz11}.
In addition, $\mathbf{z}$ and $\mathbf{i}_z$ are related by
\begin{equation}
\mathbf{z}=-Z_0\mathbf{i}_z,\label{eq:z-rx}
\end{equation}
by Ohm's law.

To derive the channel $\mathbf{H}$ and the combiner $\mathbf{G}$, we solve the system of the four equations \eqref{eq:vZHi-rx}, \eqref{eq:s-rx}, \eqref{eq:vZGi}, and \eqref{eq:z-rx}.
With the unilateral approximation \cite{ivr10}, i.e., $\mathbf{Z}_{TR}=\mathbf{0}$, it is possible to show that a procedure similar to the one used in Section~\ref{sec:tx} yields
\begin{multline}
\mathbf{G}\mathbf{H}
=\left[\left(\frac{\mathbf{Y}_G}{Y_0}+
\begin{bmatrix}
\frac{\mathbf{Z}_{RR}^{-1}}{Y_0} & \mathbf{0}\\
\mathbf{0} & \mathbf{I}
\end{bmatrix}
\right)^{-1}\right]_{N_R+(1:N_Z),1:N_R}\\
\times Z_0\mathbf{Z}_{RR}^{-1}\mathbf{Z}_{RT}\left(\mathbf{Z}_{TT}+Z_0\mathbf{I}\right)^{-1},
\end{multline}
where we introduced $\mathbf{Y}_G=\mathbf{Z}_G^{-1}$ as the admittance matrix of the \gls{milac}.
Therefore, we identify the channel $\mathbf{H}$ and the combiner $\mathbf{G}$ as
\begin{equation}
\mathbf{H}
=Z_0\mathbf{Z}_{RR}^{-1}\mathbf{Z}_{RT}\left(\mathbf{Z}_{TT}+Z_0\mathbf{I}\right)^{-1},\label{eq:H-rx}
\end{equation}
and
\begin{equation}
\mathbf{G}=\left[\left(\frac{\mathbf{Y}_G}{Y_0}+
\begin{bmatrix}
\frac{\mathbf{Z}_{RR}^{-1}}{Y_0} & \mathbf{0}\\
\mathbf{0} & \mathbf{I}
\end{bmatrix}
\right)^{-1}\right]_{N_R+(1:N_Z),1:N_R},\label{eq:G-rx}
\end{equation}
showing that the combiner $\mathbf{G}$ depends on the admittance matrix of the \gls{milac} $\mathbf{Y}_G$ and also on the mutual coupling matrix at the receiver $\mathbf{Z}_{RR}$.

It is worthwhile to compare the channel expressions obtained for \gls{milac} at the transmitter and receiver given by \eqref{eq:H-tx} and \eqref{eq:H-rx}, respectively.
To distinguish between them, we denote the channels in \eqref{eq:H-tx} and \eqref{eq:H-rx} as $\mathbf{H}_{\text{Tx}}$ and $\mathbf{H}_{\text{Rx}}$, respectively.
Since $\mathbf{H}_{\text{Tx}}$ and $\mathbf{H}_{\text{Rx}}$ are the channel expressions of the same wireless system where the roles of the transmitter and receiver are swapped, it should hold $\mathbf{H}_{\text{Tx}}^T=\mathbf{H}_{\text{Rx}}$ in the case of a reciprocal channel, which is demonstrated in the following.
By taking the transpose of $\mathbf{H}_{\text{Tx}}$ in \eqref{eq:H-tx}, we obtain $\mathbf{H}_{\text{Tx}}^T=Z_0(\mathbf{Z}_{TT}^{-1})^T\mathbf{Z}_{RT}^T((\mathbf{Z}_{RR}+Z_0\mathbf{I})^{-1})^T$, which can be rewritten as $\mathbf{H}_{\text{Tx}}^T=Z_0\mathbf{Z}_{TT}^{-1}\mathbf{Z}_{TR}(\mathbf{Z}_{RR}+Z_0\mathbf{I})^{-1}$ since $\mathbf{Z}_{TT}=\mathbf{Z}_{TT}^T$, $\mathbf{Z}_{TR}=\mathbf{Z}_{RT}^T$, and $\mathbf{Z}_{RR}=\mathbf{Z}_{RR}^T$ in the case of a reciprocal channel.
By swapping the roles of the transmitter and receiver, i.e., by substituting $\mathbf{Z}_{TT}$, $\mathbf{Z}_{TR}$, and $\mathbf{Z}_{RR}$ with $\mathbf{Z}_{RR}$, $\mathbf{Z}_{RT}$, and $\mathbf{Z}_{TT}$, respectively, we have that the expression of $\mathbf{H}_{\text{Tx}}^T$ coincides with \eqref{eq:H-rx}, showing that $\mathbf{H}_{\text{Tx}}^T=\mathbf{H}_{\text{Rx}}$.

By solving the system of equations \eqref{eq:vZHi-rx}, \eqref{eq:s-rx}, \eqref{eq:vZGi}, and \eqref{eq:z-rx} with alternative derivations, we can also obtain a different but equivalent expression of the end-to-end channel $\mathbf{G}\mathbf{H}$, i.e.,
\begin{equation}
\mathbf{G}\mathbf{H}=Z_0\left(\mathbf{Z}_{R}+Z_0\mathbf{I}\right)^{-1}\mathbf{J}_{R}\mathbf{Z}_{RT}\left(\mathbf{Z}_{TT}+Z_0\mathbf{I}\right)^{-1},
\end{equation}
where $\mathbf{Z}_{R}\in\mathbb{C}^{N_Z\times N_Z}$ and $\mathbf{J}_{R}\in\mathbb{C}^{N_Z\times N_R}$ are
\begin{gather}
\mathbf{Z}_{R}=\mathbf{Z}_{G,22}-\mathbf{J}_{R}\mathbf{Z}_{G,12},\\
\mathbf{J}_{R}=\mathbf{Z}_{G,21}\left(\mathbf{Z}_{G,11}+\mathbf{Z}_{RR}\right)^{-1}.
\end{gather}
This expression depends on the \gls{milac} impedance matrix $\mathbf{Z}_{G}$ rather than the admittance matrix $\mathbf{Y}_{G}$, and has been used to study the impact of matching networks in \gls{mimo} channels \cite[Section~II]{ivr10}.
Therefore, a \gls{milac} at the receiver has the same impact on the channel expression as a matching network.
In terms of representation, it is convenient to represent the effect of a \gls{milac} with its admittance matrix as in \eqref{eq:G-rx}, since it is closely related to its tunable components \cite{ner25-1}.

\subsection{Model With No Mutual Coupling}

We now assume all antennas to be perfectly matched to $Z_0$ and that the mutual coupling effects are negligible, which jointly give $\mathbf{Z}_T=Z_0\mathbf{I}$ and $\mathbf{Z}_R=Z_0\mathbf{I}$.
By substituting $\mathbf{Z}_T=Z_0\mathbf{I}$ and $\mathbf{Z}_R=Z_0\mathbf{I}$ into \eqref{eq:H-rx} and \eqref{eq:G-rx}, we obtain that the channel matrix is given by $\mathbf{H}=\mathbf{Z}_{RT}/(2Z_0)$ as in Section~\ref{sec:tx}, while the combining matrix of the \gls{milac} boils down to
\begin{equation}
\mathbf{G}=\left[\left(\frac{\mathbf{Y}_G}{Y_0}+\mathbf{I}\right)^{-1}\right]_{N_R+(1:N_Z),1:N_R},\label{eq:G-noMC-rx}
\end{equation}
which agrees with prior works on \gls{milac} \cite{ner25-1,ner25-2}.

\section{Modeling a MIMO System with MiLAC\\at both the Transmitter and Receiver}
\label{sec:both}

In this section, we analyze a \gls{mimo} system where both the transmitter and receiver have a \gls{milac}, as shown in Fig.~\ref{fig:both}.

At the transmitter, we denote the source voltages as $\mathbf{s}\in\mathbb{C}^{N_S\times1}$, where $N_S$ is the number of \gls{rf} chains, the voltages and currents at the input of the \gls{milac} as $\mathbf{w}\in\mathbb{C}^{N_S\times1}$ and $\mathbf{i}_w\in\mathbb{C}^{N_S\times1}$, and the voltages and currents at the transmitting antennas as $\mathbf{x}\in\mathbb{C}^{N_T\times1}$ and $\mathbf{i}_x\in\mathbb{C}^{N_T\times1}$, where $N_T$ is the number of antennas.
At the receiver, we denote the voltages and currents at the receiving antennas as $\mathbf{y}\in\mathbb{C}^{N_R\times1}$ and $\mathbf{i}_y\in\mathbb{C}^{N_R\times1}$, where $N_R$ is the number of antennas, and the voltages and currents at the output of the \gls{milac} as $\mathbf{z}\in\mathbb{C}^{N_Z\times1}$ and $\mathbf{i}_z\in\mathbb{C}^{N_Z\times1}$, where $N_Z$ the number of \gls{rf} chains.
The received signal $\mathbf{z}$ is related to the transmitted signal $\mathbf{s}$ by
\begin{equation}
\mathbf{z}=\mathbf{G}\mathbf{H}\mathbf{F}\mathbf{s},
\end{equation}
where $\mathbf{G}\in\mathbb{C}^{N_Z\times N_R}$ is the combining matrix implemented by the receiver-side \gls{milac}, $\mathbf{H}\in\mathbb{C}^{N_R\times N_T}$ is the wireless channel, and $\mathbf{F}\in\mathbb{C}^{N_T\times N_S}$ is the precoding matrix implemented by the transmitter-side \gls{milac}.
Our goal is to derive accurate expressions for $\mathbf{G}$, $\mathbf{H}$, and $\mathbf{F}$ that include the effects of mutual coupling between antennas.

\begin{figure}[t]
\centering
\includegraphics[width=0.48\textwidth]{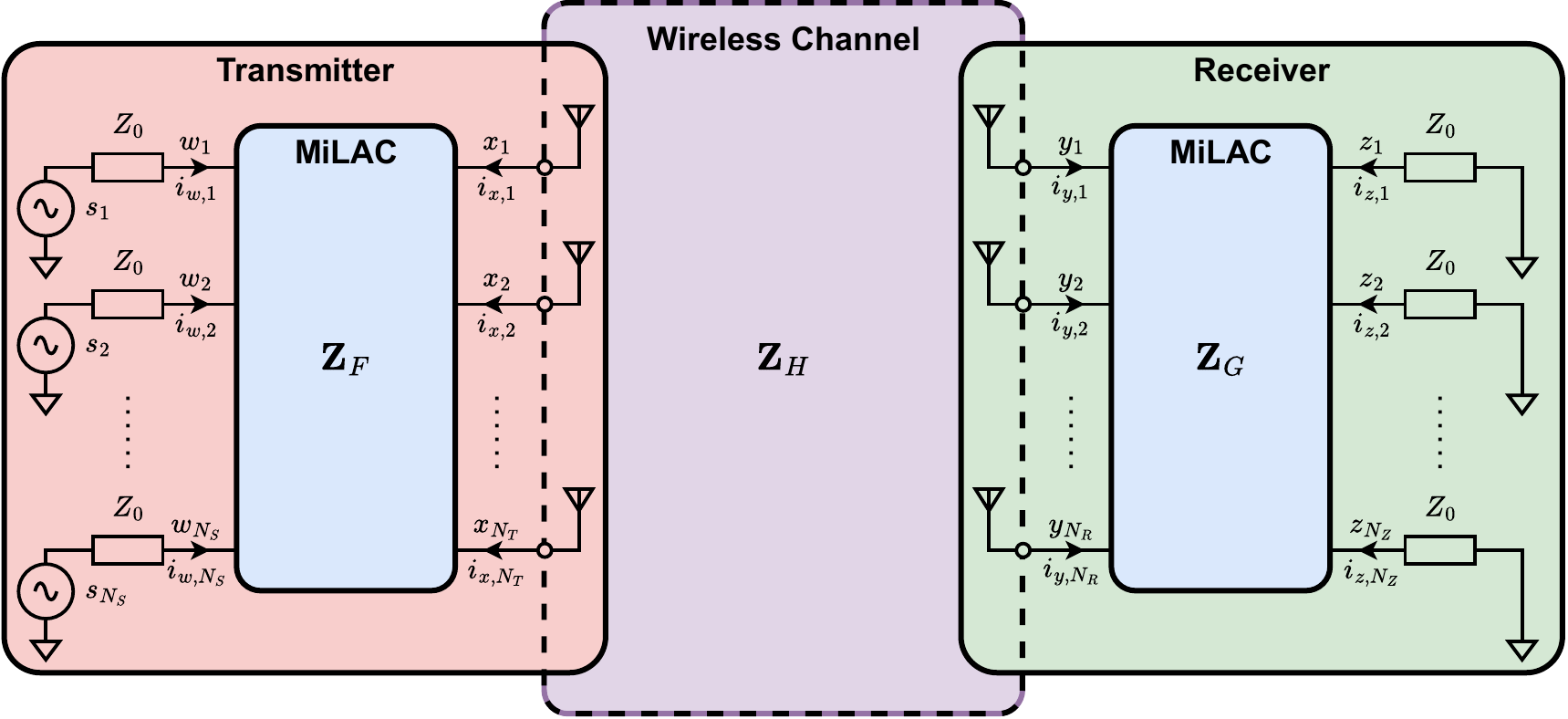}
\caption{Model of a MIMO system with MiLAC at both the transmitter and receiver.}
\label{fig:both}
\end{figure}

\subsection{Model With Mutual Coupling}

As done in the previous sections, we first derive the relationships between the signals at the wireless channel, the transmitter, and the receiver, and then solve the system of equations including all of them to obtain $\mathbf{G}$, $\mathbf{H}$, and $\mathbf{F}$.
The wireless channel is modeled as a $(N_T+N_R)$-port network with impedance matrix $\mathbf{Z}_H\in\mathbb{C}^{(N_T+N_R)\times(N_T+N_R)}$, partitioned as in \eqref{eq:ZH}.
Therefore, we have
\begin{equation}
\begin{bmatrix}
\mathbf{x}\\
\mathbf{y}
\end{bmatrix}=
\begin{bmatrix}
\mathbf{Z}_{TT} & \mathbf{Z}_{TR}\\
\mathbf{Z}_{RT} & \mathbf{Z}_{RR}
\end{bmatrix}
\begin{bmatrix}
-\mathbf{i}_x\\
-\mathbf{i}_y
\end{bmatrix},\label{eq:vZHi-both}
\end{equation}
where the currents are defined with directions as represented in Fig.~\ref{fig:both}.
At the transmitter, $\mathbf{w}$ and $\mathbf{i}_w$ are related to the source vector $\mathbf{s}$ as
\begin{equation}
\mathbf{w}=\mathbf{s}-Z_0\mathbf{i}_w,\label{eq:s-both}
\end{equation}
by Ohm's law, and, introducing the impedance matrix of the \gls{milac} as $\mathbf{Z}_F$, the voltages and currents at the \gls{milac} ports are related by \eqref{eq:vZFi}.
At the receiver, we similarly introduce the impedance matrix of the \gls{milac} as $\mathbf{Z}_{G}$ such that the relationship between voltages and currents in \eqref{eq:vZGi} holds.
In addtion, $\mathbf{z}$ and $\mathbf{i}_z$ are also related by
\begin{equation}
\mathbf{z}=-Z_0\mathbf{i}_z,\label{eq:z-both}
\end{equation}
because of Ohm's law.

The matrices $\mathbf{G}$, $\mathbf{H}$, and $\mathbf{F}$ can be obtained by solving the system of the five equations \eqref{eq:vZFi}, \eqref{eq:vZGi}, \eqref{eq:vZHi-both}, \eqref{eq:s-both}, and \eqref{eq:z-both}.
With the unilateral approximation \cite{ivr10}, i.e., $\mathbf{Z}_{TR}=\mathbf{0}$, it is possible to show that a procedure similar to the one in the previous sections gives
\begin{multline}
\mathbf{G}\mathbf{H}\mathbf{F}
=\left[\left(\frac{\mathbf{Y}_G}{Y_0}+
\begin{bmatrix}
\frac{\mathbf{Z}_{RR}^{-1}}{Y_0} & \mathbf{0}\\
\mathbf{0} & \mathbf{I}
\end{bmatrix}
\right)^{-1}\right]_{N_R+(1:N_Z),1:N_R}\\
\times Z_0\mathbf{Z}_{RR}^{-1}\mathbf{Z}_{RT}\mathbf{Z}_{TT}^{-1}\\
\times\left[\left(\frac{\mathbf{Y}_F}{Y_0}+
\begin{bmatrix}
\mathbf{I} & \mathbf{0}\\
\mathbf{0} & \frac{\mathbf{Z}_{TT}^{-1}}{Y_0}
\end{bmatrix}
\right)^{-1}\right]_{N_S+(1:N_T),1:N_S},
\end{multline}
allowing us to identify the channel $\mathbf{H}$ as
\begin{equation}
\mathbf{H}
=Z_0\mathbf{Z}_{RR}^{-1}\mathbf{Z}_{RT}\mathbf{Z}_{TT}^{-1},\label{eq:H-both}
\end{equation}
the precoder $\mathbf{F}$ as in \eqref{eq:F-tx}, and the combiner $\mathbf{G}$ as in \eqref{eq:G-rx}.

Departing from the system of five equations \eqref{eq:vZFi}, \eqref{eq:vZGi}, \eqref{eq:vZHi-both}, \eqref{eq:s-both}, and \eqref{eq:z-both}, and executing different computations, we can reach an alternative expression of the end-to-end channel $\mathbf{G}\mathbf{H}\mathbf{F}$ given by
\begin{equation}
\mathbf{G}\mathbf{H}\mathbf{F}=Z_0\left(\mathbf{Z}_{R}+Z_0\mathbf{I}\right)^{-1}\mathbf{J}_{R}\mathbf{Z}_{RT}\mathbf{J}_{T}^T\left(\mathbf{Z}_{T}+Z_0\mathbf{I}\right)^{-1},
\end{equation}
where
\begin{gather}
\mathbf{Z}_{T}=\mathbf{Z}_{F,11}-\mathbf{J}_{T}\mathbf{Z}_{F,21},\\
\mathbf{Z}_{R}=\mathbf{Z}_{G,22}-\mathbf{J}_{R}\mathbf{Z}_{G,12},\\
\mathbf{J}_{T}=\mathbf{Z}_{F,12}\left(\mathbf{Z}_{F,22}+\mathbf{Z}_{TT}\right)^{-1},\\
\mathbf{J}_{R}=\mathbf{Z}_{G,21}\left(\mathbf{Z}_{G,11}+\mathbf{Z}_{RR}\right)^{-1}.
\end{gather}
This expression is in exact agreement with the end-to-end channel model of a digital \gls{mimo} system with a matching network at both the transmitter and receiver, as derived in \cite[Section~II]{ivr10}.
However, \glspl{milac} can be reconfigured on a per-channel realization basis, unlike matching networks, which are fixed, and it is therefore more convenient to characterize their effect through their admittance matrices rather than impedance matrices.

\subsection{Model With No Mutual Coupling}

The derived system model can be simplified by assuming all antennas to be perfectly matched and neglecting the mutual coupling effects, yielding $\mathbf{Z}_T=Z_0\mathbf{I}$ and $\mathbf{Z}_R=Z_0\mathbf{I}$.
By substituting $\mathbf{Z}_T=Z_0\mathbf{I}$ and $\mathbf{Z}_R=Z_0\mathbf{I}$ in \eqref{eq:H-both}, the channel matrix simplifies as $\mathbf{H}=\mathbf{Z}_{RT}/Z_0$,
while substituting $\mathbf{Z}_T=Z_0\mathbf{I}$ and $\mathbf{Z}_R=Z_0\mathbf{I}$ in the precoder $\mathbf{F}$ and the combiner $\mathbf{G}$ given by \eqref{eq:F-tx} and \eqref{eq:G-rx}, they simplifies as in \eqref{eq:F-noMC-tx} and \eqref{eq:G-noMC-rx}, respectively.

\section{Physics-Compliant Optimization of MiLAC}
\label{sec:opt}

We have modeled \gls{milac}-aided \gls{mimo} systems accounting for the mutual coupling effects at the transmitter and receiver, and shown how the end-to-end system models vary depending on these effects.
In this section, we focus on a \gls{miso} system with \gls{milac} at the transmitter and optimize the \gls{milac} in the presence of mutual coupling to maximize the received signal power.

Consider a \gls{miso} system with \gls{milac} at the transmitter, namely the same as in Section~\ref{sec:tx} with $N_S=1$ \gls{rf} chain at the transmitter and $N_R=1$ antenna at the receiver, as shown in Fig.~\ref{fig:miso}(a).
Assuming the receiving antenna to be perfectly matched, i.e., $z_{RR}=Z_0$, the received signal model derived in Section~\ref{sec:tx} boils down to
\begin{equation}
z=\mathbf{h}\mathbf{f}s,
\end{equation}
where $s$ is the transmitted symbol such that $\mathbb{E}[\vert s\vert^2]=P_T$, with $P_T$ being the transmitted signal power.
The wireless channel $\mathbf{h}\in\mathbb{C}^{1\times N_T}$ is
\begin{equation}
\mathbf{h}
=\frac{1}{2}\mathbf{z}_{RT}\mathbf{Y}_{TT},\label{eq:h}
\end{equation}
and the precoder implemented by the \gls{milac} $\mathbf{f}\in\mathbb{C}^{N_T\times 1}$ is
\begin{equation}
\mathbf{f}=\left[\left(\frac{\mathbf{Y}}{Y_0}+
\begin{bmatrix}
1 & \mathbf{0}\\
\mathbf{0} & \frac{\mathbf{Y}_{TT}}{Y_0}
\end{bmatrix}\right)^{-1}\right]_{2:N_T+1,1},\label{eq:f}
\end{equation}
following \eqref{eq:H-tx} and \eqref{eq:F-tx}, where we have denoted the admittance matrix of the \gls{milac} as $\mathbf{Y}$ and introduced $\mathbf{Y}_{TT}=\mathbf{Z}_{TT}^{-1}$ to simplify the notation.
With perfect matching and no mutual coupling at the transmitter, we have $\mathbf{Y}_{TT}=Y_0\mathbf{I}$ and the expressions of $\mathbf{h}$ and $\mathbf{f}$ simplify to
\begin{equation}
\mathbf{h}=\frac{\mathbf{z}_{RT}}{2Z_0},\;
\mathbf{f}=\left[\left(\frac{\mathbf{Y}}{Y_0}+\mathbf{I}\right)^{-1}\right]_{2:N_T+1,1}.
\end{equation}

For a \gls{milac} that is lossless and reciprocal, the admittance matrix $\mathbf{Y}$ is purely imaginary and symmetric, i.e., it satisfies $\mathbf{Y}=j\mathbf{B}$ and $\mathbf{B}=\mathbf{B}^{T}$, where $\mathbf{B}\in\mathbb{R}^{(N_T+1)\times(N_T+1)}$ is the susceptance matrix of the \gls{milac}.
Our goal is therefore to optimize the admittance matrix $\mathbf{Y}$ subject to these constraints to maximize the received signal power $P=P_T\vert\mathbf{h}\mathbf{f}\vert^2$, with and without mutual coupling.

\begin{figure}[t]
\centering
\includegraphics[width=0.48\textwidth]{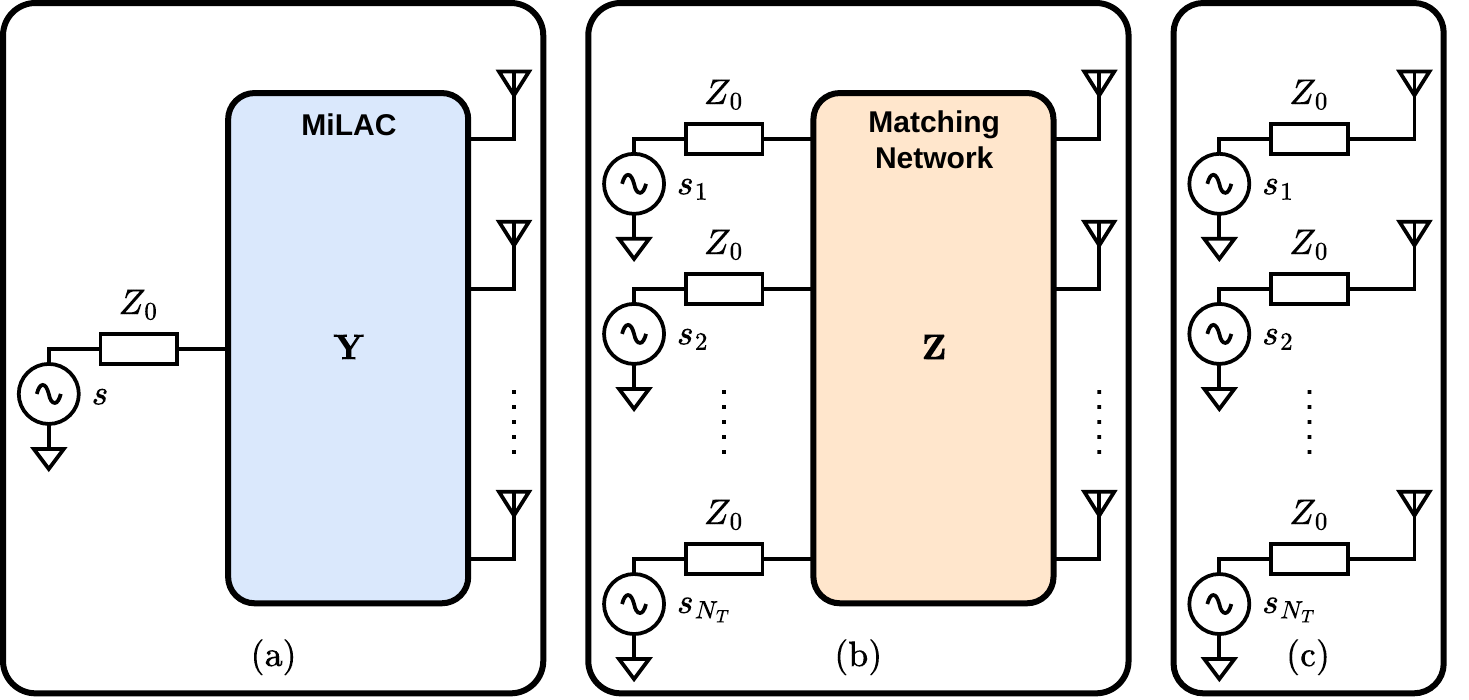}
\caption{Multi-antenna transmitter in a MISO system operating (a) MiLAC-aided beamforming, (b) digital beamforming with matching network, and (c) digital beamforming without matching network.}
\label{fig:miso}
\end{figure}

\subsection{Optimization of MiLAC With Mutual Coupling}

In the presence of mutual coupling, the received signal power maximization problem is given by
\begin{align}
\underset{\mathbf{B}}{\mathsf{\mathrm{max}}}\;\;
&P_T\left\vert\mathbf{h}\mathbf{f}\right\vert^2\label{eq:p1-o}\\
\mathsf{\mathrm{s.t.}}\;\;\;
&
\mathbf{f}=\left[\left(\frac{\mathbf{Y}}{Y_0}+
\begin{bmatrix}
1 & \mathbf{0}\\
\mathbf{0} & \frac{\mathbf{Y}_{TT}}{Y_0}
\end{bmatrix}\right)^{-1}\right]_{2:N_T+1,1},\\
&\mathbf{Y}=j\mathbf{B},\;
\mathbf{B}=\mathbf{B}^{T},\label{eq:p1-c}
\end{align}
where the optimization variable is the real symmetric matrix $\mathbf{B}$, which is the susceptance matrix of the \gls{milac}.
In the following, we derive in closed-form a solution to problem \eqref{eq:p1-o}-\eqref{eq:p1-c} that is proved to be globally optimal.

We begin by introducing the auxiliary constant terms $\hat{\mathbf{h}}\in\mathbb{C}^{1\times(N_T+1)}$ as $\hat{\mathbf{h}}=[0,\mathbf{h}]$, $\hat{\mathbf{Y}}_{TT}\in\mathbb{C}^{(N_T+1)\times(N_T+1)}$ as $\hat{\mathbf{Y}}_{TT}=\text{diag}(Y_0,\mathbf{Y}_{TT})$, and $\hat{\mathbf{e}}\in\mathbb{C}^{(N_T+1)\times 1}$ as $\hat{\mathbf{e}}=[1,0,\ldots,0]^T$.
Therefore, problem \eqref{eq:p1-o}-\eqref{eq:p1-c} can be equivalently rewritten as
\begin{align}
\underset{\mathbf{B}}{\mathsf{\mathrm{max}}}\;\;
&P_TY_0^2\left\vert\hat{\mathbf{h}}\left(j\mathbf{B}+\hat{\mathbf{Y}}_{TT}\right)^{-1}\hat{\mathbf{e}}\right\vert^{2}\label{eq:p2-o}\\
\mathsf{\mathrm{s.t.}}\;\;\;
&\mathbf{B}=\mathbf{B}^{T},\label{eq:p2-c}
\end{align}
which has the same form as the received signal power maximization problem in the presence of a fully-connected \gls{ris} with mutual coupling solved in \cite[Section~III]{ner26} through a global optimal solution.
In the following, we globally solve \eqref{eq:p2-o}-\eqref{eq:p2-c} following a similar solution to \cite[Section~III]{ner26}, inspired by the effect of the matching (or decoupling) network used in the context of \gls{ris} in \cite{sem26}.

We introduce the auxiliary variable $\bar{\mathbf{B}}\in\mathbb{R}^{(N_T+1)\times(N_T+1)}$ as a function of $\mathbf{B}$ as
\begin{equation}
\bar{\mathbf{B}}=Y_0\Re\{\hat{\mathbf{Y}}_{TT}\}^{-1/2}\left(\mathbf{B}+\Im\{\hat{\mathbf{Y}}_{TT}\}\right)\Re\{\hat{\mathbf{Y}}_{TT}\}^{-1/2},\label{eq:B-bar}
\end{equation}
which is a real matrix since $\mathbf{B}$, $\Im\{\hat{\mathbf{Y}}_{TT}\}$, and $\Re\{\hat{\mathbf{Y}}_{TT}\}^{-1/2}$ are real matrices.
Therefore, by substituting
\begin{equation}
\mathbf{B}=\frac{1}{Y_0}\Re\{\hat{\mathbf{Y}}_{TT}\}^{1/2}\bar{\mathbf{B}}\Re\{\hat{\mathbf{Y}}_{TT}\}^{1/2}-\Im\{\hat{\mathbf{Y}}_{TT}\},\label{eq:B}
\end{equation}
which follows from \eqref{eq:B-bar}, into \eqref{eq:p2-o}, problem \eqref{eq:p2-o}-\eqref{eq:p2-c} is equivalently rewritten as
\begin{align}
\underset{\mathbf{B}}{\mathsf{\mathrm{max}}}\;\;
&P_TY_0^2\left\vert\hat{\mathbf{h}}\Re\{\hat{\mathbf{Y}}_{TT}\}^{-1/2}\sqrt{Y_0}\right.\notag\\
&\left.\times\left(j\bar{\mathbf{B}}+Y_0\mathbf{I}\right)^{-1}\sqrt{Y_0}\Re\{\hat{\mathbf{Y}}_{TT}\}^{-1/2}\hat{\mathbf{e}}\right\vert^{2}\label{eq:p3-o}\\
\mathsf{\mathrm{s.t.}}\;\;\;
&\eqref{eq:B-bar},\;
\mathbf{B}=\mathbf{B}^{T}.\label{eq:p3-c}
\end{align}
Observe that constraint \eqref{eq:p3-c} means that $\bar{\mathbf{B}}$ can be an arbitrary symmetric matrix.
Therefore, problem \eqref{eq:p3-o}-\eqref{eq:p3-c} can be solved for a symmetric matrix $\bar{\mathbf{B}}$ and then $\mathbf{B}$ can be derived with \eqref{eq:B}.
To do so, we equivalently rewrite problem \eqref{eq:p3-o}-\eqref{eq:p3-c} as
\begin{align}
\underset{\bar{\mathbf{B}}}{\mathsf{\mathrm{max}}}\;\;
&P_TY_0^3\left\vert\hat{\mathbf{h}}\Re\{\hat{\mathbf{Y}}_{TT}\}^{-1/2}\left(j\bar{\mathbf{B}}+Y_0\mathbf{I}\right)^{-1}\hat{\mathbf{e}}\right\vert^{2}\label{eq:p4-o}\\
\mathsf{\mathrm{s.t.}}\;\;\;
&\bar{\mathbf{B}}=\bar{\mathbf{B}}^{T},\label{eq:p4-c}
\end{align}
where we also exploited the fact that $\sqrt{Y_0}\Re\{\hat{\mathbf{Y}}_{TT}\}^{-1/2}\hat{\mathbf{e}}=\hat{\mathbf{e}}$.

We now introduce another auxiliary variable $\bar{\mathbf{\Theta}}\in\mathbb{C}^{(N_T+1)\times(N_T+1)}$ as a function of $\bar{\mathbf{B}}$ as
\begin{equation}
\bar{\mathbf{\Theta}}=\left(Y_0\mathbf{I}+j\bar{\mathbf{B}}\right)^{-1}\left(Y_0\mathbf{I}-j\bar{\mathbf{B}}\right),\label{eq:T-bar}
\end{equation}
which helps in solving \eqref{eq:p4-o}-\eqref{eq:p4-c} because of the following two properties.
First, as a direct consequence of \eqref{eq:T-bar}, we have the relationship
$(j\bar{\mathbf{B}}+Y_0\mathbf{I})^{-1}=(\bar{\mathbf{\Theta}}+\mathbf{I})/(2Y_0)$, useful to simplify the objective function in \eqref{eq:p4-o}.
Second, since $\bar{\mathbf{B}}$ can be an arbitrary symmetric matrix, $\bar{\mathbf{\Theta}}$ can be an arbitrary unitary and symmetric matrix, from which $\bar{\mathbf{B}}$ can be recovered by inverting \eqref{eq:T-bar}.
These two properties allow us to rewrite \eqref{eq:p4-o}-\eqref{eq:p4-c} as
\begin{align}
\underset{\bar{\mathbf{\Theta}}}{\mathsf{\mathrm{max}}}\;\;
&\frac{P_TY_0}{4}\left\vert\hat{\mathbf{h}}\Re\{\hat{\mathbf{Y}}_{TT}\}^{-1/2}\left(\bar{\mathbf{\Theta}}+\mathbf{I}\right)\hat{\mathbf{e}}\right\vert^{2}\\
\mathsf{\mathrm{s.t.}}\;\;\;
&\bar{\mathbf{\Theta}}^H\bar{\mathbf{\Theta}}=\mathbf{I},\;\bar{\mathbf{\Theta}}=\bar{\mathbf{\Theta}}^{T},
\end{align}
which can be further simplified as
\begin{align}
\underset{\bar{\mathbf{\Theta}}}{\mathsf{\mathrm{max}}}\;\;
&\frac{P_TY_0}{4}\left\vert\mathbf{h}\Re\{\mathbf{Y}_{TT}\}^{-1/2}\left[\bar{\mathbf{\Theta}}\right]_{2:N_T+1,1}\right\vert^{2}\label{eq:p5-o1}\\
\mathsf{\mathrm{s.t.}}\;\;\;
&\bar{\mathbf{\Theta}}^H\bar{\mathbf{\Theta}}=\mathbf{I},\;\bar{\mathbf{\Theta}}=\bar{\mathbf{\Theta}}^{T},\label{eq:p5-c}
\end{align}
because of the definitions of $\hat{\mathbf{h}}$, $\hat{\mathbf{Y}}_{TT}$, and $\hat{\mathbf{e}}$.
As discussed in \cite{ner25-3}, the global optimal solution to this problem ensures that $\left[\bar{\mathbf{\Theta}}\right]_{2:N_T+1,1}=(\mathbf{h}\Re\{\mathbf{Y}_{TT}\}^{-1/2})^H/\Vert\mathbf{h}\Re\{\mathbf{Y}_{TT}\}^{-1/2}\Vert$, up to a phase shift.
A unitary symmetric matrix $\bar{\mathbf{\Theta}}$ fulfilling this condition is given as a function of the right singular vectors of $\mathbf{h}\Re\{\mathbf{Y}_{TT}\}^{-1/2}$, collected into the columns of a matrix denoted as $\mathbf{V}\in\mathbb{C}^{N_T\times N_T}$.
Partitioning $\mathbf{V}$ as $\mathbf{V}=[\bar{\mathbf{v}},\bar{\mathbf{V}}]$, with $\bar{\mathbf{v}}\in\mathbb{C}^{N_T\times 1}$ and $\bar{\mathbf{V}}\in\mathbb{C}^{N_T\times (N_T-1)}$, an optimal $\bar{\mathbf{\Theta}}$ is given by
\begin{equation}
\bar{\mathbf{\Theta}}=
\begin{bmatrix}
0 & \bar{\mathbf{v}}^T\\
\bar{\mathbf{v}} & \bar{\mathbf{V}}\bar{\mathbf{V}}^T
\end{bmatrix},
\end{equation}
which is unitary and symmetric by construction.

The maximum received signal power achieved by \gls{milac}-aided beamforming with mutual coupling is given by substituting $\left[\bar{\mathbf{\Theta}}\right]_{2:N_T+1,1}=(\mathbf{h}\Re\{\mathbf{Y}_{TT}\}^{-1/2})^H/\Vert\mathbf{h}\Re\{\mathbf{Y}_{TT}\}^{-1/2}\Vert$ into \eqref{eq:p5-o1}, giving
\begin{align}
P_{\text{MC}}^{\text{MiLAC}}
&=\frac{P_TY_0}{4}\left\Vert\mathbf{h}\Re\{\mathbf{Y}_{TT}\}^{-1/2}\right\Vert^2\\
&=\frac{P_TY_0}{16}\mathbf{z}_{RT}\mathbf{Y}_{TT}\Re\{\mathbf{Y}_{TT}\}\mathbf{Y}_{TT}^H\mathbf{z}_{RT}^H\label{eq:P-mc2},
\end{align}
where \eqref{eq:P-mc2} follows from $\mathbf{h}=\mathbf{z}_{RT}\mathbf{Y}_{TT}/2$ and since $\Vert\mathbf{v}\Vert^2=\mathbf{v}\mathbf{v}^H$ for any row vector $\mathbf{v}$.
We now observe that
\begin{multline}
\mathbf{Y}_{TT}\Re\{\mathbf{Y}_{TT}\}^{-1}\mathbf{Y}_{TT}^H=\\
\Re\{\mathbf{Y}_{TT}\}+\Im\{\mathbf{Y}_{TT}\}\Re\{\mathbf{Y}_{TT}\}^{-1}\Im\{\mathbf{Y}_{TT}\},
\end{multline}
by using $\mathbf{Y}_{TT}=\Re\{\mathbf{Y}_{TT}\}+j\Im\{\mathbf{Y}_{TT}\}$, which can be more concisely rewritten as
\begin{equation}
\mathbf{Y}_{TT}\Re\{\mathbf{Y}_{TT}\}^{-1}\mathbf{Y}_{TT}^H=\Re\{\mathbf{Z}_{TT}\}^{-1},\label{eq:YYY}
\end{equation}
by exploiting $(\mathbf{A}+j\mathbf{B})^{-1}=(\mathbf{A}+\mathbf{B}\mathbf{A}^{-1}\mathbf{B})^{-1}+j\mathbf{A}^{-1}\mathbf{B}(\mathbf{A}+\mathbf{B}\mathbf{A}^{-1}\mathbf{B})^{-1}$, with $\mathbf{A}$ and $\mathbf{B}$ being square real matrices \cite{dai24}.
Therefore, substituting \eqref{eq:YYY} into \eqref{eq:P-mc2}, we obtain
\begin{align}
P_{\text{MC}}^{\text{MiLAC}}
&=\frac{P_TY_0}{16}\mathbf{z}_{RT}\Re\{\mathbf{Z}_{TT}\}^{-1}\mathbf{z}_{RT}^H\label{eq:P-mc3}\\
&=\frac{P_TY_0}{16}\left\Vert\mathbf{z}_{RT}\Re\{\mathbf{Z}_{TT}\}^{-1/2}\right\Vert^2,\label{eq:P-mc4}
\end{align}
as the received signal power achievable by a \gls{milac} in a \gls{miso} system with mutual coupling.

We want now to derive the average performance $\mathbb{E}[P_{\text{MC}}^{\text{MiLAC}}]$ under the assumption that $\mathbf{z}_{RT}$ is a random variable with covariance matrix $\mathbb{E}[\mathbf{z}_{RT}^H\mathbf{z}_{RT}]=\rho\mathbf{I}$, i.e., in the presence of uncorrelated fading with path gain $\rho$.
By taking the expectation of \eqref{eq:P-mc3}, we obtain
\begin{align}
\mathbb{E}\left[P_{\text{MC}}^{\text{MiLAC}}\right]
&=\frac{P_TY_0}{16}\mathbb{E}\left[\mathbf{z}_{RT}\Re\{\mathbf{Z}_{TT}\}^{-1}\mathbf{z}_{RT}^H\right]\\
&=\frac{P_TY_0\rho}{16}\text{Tr}\left(\Re\{\mathbf{Z}_{TT}\}^{-1}\right),\label{eq:EP-mc3}
\end{align}
where \eqref{eq:EP-mc3} follows from the symmetry of the Frobenius inner product, the linearity of the trace, and the assumption that $\mathbb{E}[\mathbf{z}_{RT}^H\mathbf{z}_{RT}]=\rho\mathbf{I}$.
Interestingly, the average received signal power scales only with the trace term $\text{Tr}\left(\Re\{\mathbf{Z}_{TT}\}^{-1}\right)$, and depends only on the real part of the mutual coupling matrix.

\begin{table*}[t]
\centering
\caption{Comparison between MiLAC-aided and digital MISO systems.}
\begin{tabular}{|c|c|c|c|}
\hline
 & MiLAC & Digital with matching network & Digital without matching network\\
\hline
\# of RF chains &
$1$ &
$N_T$ &
$N_T$
\\
DACs resolution &
Low  &
High &
High
\\
\# of impedance components &
$\mathcal{O}(N_T)$ &
$\mathcal{O}(N_T^2)$ &
$0$
\\
Type of impedance components &
Tunable &
Fixed &
None
\\
Operations at each symbol time &
$0$ &
$\mathcal{O}(N_T)$ &
$\mathcal{O}(N_T)$
\\
\hline
$P$ with mutual coupling &
$\frac{P_TY_0}{16}\left\Vert\mathbf{z}_{RT}\Re\{\mathbf{Z}_{TT}\}^{-1/2}\right\Vert^2$ &
$\frac{P_TY_0}{16}\left\Vert\mathbf{z}_{RT}\Re\{\mathbf{Z}_{TT}\}^{-1/2}\right\Vert^2$ &
$\frac{P_T}{4}\left\Vert\mathbf{z}_{RT}\left(\mathbf{Z}_{TT}+Z_0\mathbf{I}\right)^{-1}\right\Vert^2$
\\
$\mathbb{E}[P]$ with mutual coupling &
$\frac{P_TY_0\rho}{16}\text{Tr}\left(\Re\{\mathbf{Z}_{TT}\}^{-1}\right)$ &
$\frac{P_TY_0\rho}{16}\text{Tr}\left(\Re\{\mathbf{Z}_{TT}\}^{-1}\right)$ &
$\frac{P_T\rho}{4}\text{Tr}\left(\left(\left(\mathbf{Z}_{TT}+Z_0\mathbf{I}\right)^H\left(\mathbf{Z}_{TT}+Z_0\mathbf{I}\right)\right)^{-1}\right)$
\\
$P$ without mutual coupling &
$\frac{P_TY_0^2}{16}\left\Vert\mathbf{z}_{RT}\right\Vert^2$ &
$\frac{P_TY_0^2}{16}\left\Vert\mathbf{z}_{RT}\right\Vert^2$ &
$\frac{P_TY_0^2}{16}\left\Vert\mathbf{z}_{RT}\right\Vert^2$
\\
$\mathbb{E}[P]$ without mutual coupling &
$\frac{P_TY_0^2\rho}{16}N_T$ &
$\frac{P_TY_0^2\rho}{16}N_T$ &
$\frac{P_TY_0^2\rho}{16}N_T$
\\
\hline
\end{tabular}
\label{tab}
\end{table*}

\subsection{Optimization of MiLAC With No Mutual Coupling}
\label{sec:opt-unaware}

We now show how the proposed optimization and performance analysis simplifies under the assumptions of perfect matching at all transmitting antennas and no mutual coupling, i.e., $\mathbf{Z}_{TT}=Z_0\mathbf{I}$, or, equivalently, $\mathbf{Y}_{TT}=Y_0\mathbf{I}$.
In this case, the received signal power maximization problem is
\begin{align}
\underset{\mathbf{B}}{\mathsf{\mathrm{max}}}\;\;
&P_T\left\vert\mathbf{h}\mathbf{f}\right\vert^{2}\label{eq:p1-o-no}\\
\mathsf{\mathrm{s.t.}}\;\;\;
&\mathbf{f}=\left[\left(\frac{\mathbf{Y}}{Y_0}+\mathbf{I}\right)^{-1}\right]_{2:N_T+1,1},\\
&\mathbf{Y}=j\mathbf{B},\;\mathbf{B}=\mathbf{B}^{T},\label{eq:p1-c-no}
\end{align}
where $\mathbf{B}$ is the susceptance of the \gls{milac}, which is real and symmetric.

A globally optimal solution to \eqref{eq:p1-o-no}-\eqref{eq:p1-c-no} can be found by introducing the scattering matrix of the \gls{milac} $\mathbf{\Theta}\in\mathbb{C}^{(N_T+1)\times(N_T+1)}$ as a function of $\mathbf{B}$ as
\begin{equation}
\mathbf{\Theta}=\left(Y_0\mathbf{I}+j\mathbf{B}\right)^{-1}\left(Y_0\mathbf{I}-j\mathbf{B}\right),\label{eq:T}
\end{equation}
which is unitary and symmetric.
By exploiting the relationship $(j\mathbf{B}/Y_0+\mathbf{I})^{-1}=(\mathbf{\Theta}+\mathbf{I})/2$, \eqref{eq:p1-o-no}-\eqref{eq:p1-c-no} can be rewritten as
\begin{align}
\underset{\mathbf{\Theta}}{\mathsf{\mathrm{max}}}\;\;
&\frac{P_T}{4}\left\vert\mathbf{h}\left[\mathbf{\Theta}\right]_{2:N_T+1,1}\right\vert^2\label{eq:p2-o-no}\\
\mathsf{\mathrm{s.t.}}\;\;\;
&\mathbf{\Theta}^H\mathbf{\Theta}=\mathbf{I},\;\mathbf{\Theta}=\mathbf{\Theta}^{T},\label{eq:p2-c-no}
\end{align}
which has a global optimal solution as discussed in \cite{ner25-3}.
Given the matrix $\mathbf{U}\in\mathbb{C}^{N_T\times N_T}$ containing the right singular vectors of $\mathbf{h}$ in its columns, partitioned as $\mathbf{U}=[\bar{\mathbf{u}},\bar{\mathbf{U}}]$, with $\bar{\mathbf{u}}\in\mathbb{C}^{N_T\times 1}$ and $\bar{\mathbf{U}}\in\mathbb{C}^{N_T\times (N_T-1)}$, it is possible to show that an optimal $\mathbf{\Theta}$ is
\begin{equation}
\mathbf{\Theta}=
\begin{bmatrix}
0 & \bar{\mathbf{u}}^T\\
\bar{\mathbf{u}} & \bar{\mathbf{U}}\bar{\mathbf{U}}^T
\end{bmatrix},
\end{equation}
which ensures that $[\mathbf{\Theta}]_{2:N_T+1,1}=\mathbf{h}^H/\Vert\mathbf{h}\Vert$.

The maximum received signal power achievable by a \gls{milac} with no mutual coupling can be readily obtained by substituting $[\mathbf{\Theta}]_{2:N_T+1,1}=\mathbf{h}^H/\Vert\mathbf{h}\Vert$ into \eqref{eq:p2-o-no}, which gives
\begin{align}
P_{\text{NoMC}}^{\text{MiLAC}}
&=\frac{P_T}{4}\left\Vert\mathbf{h}\right\Vert^2\\
&=\frac{P_TY_0^2}{16}\left\Vert\mathbf{z}_{RT}\right\Vert^2,\label{eq:P-noMC}
\end{align}
where \eqref{eq:P-noMC} follows from $\mathbf{h}=\mathbf{z}_{RT}/(2Z_0)$.
Note that $P_{\text{NoMC}}^{\text{MiLAC}}$ can also be obtained by substituting $\mathbf{Z}_{TT}=Z_0\mathbf{I}$ into the expression of $P_{\text{MC}}^{\text{MiLAC}}$ in \eqref{eq:P-mc4}, as expected.

We now derive the average performance $\mathbb{E}[P_{\text{NoMC}}^{\text{MiLAC}}]$ under the assumption that $\mathbf{z}_{RT}$ is a random variable with covariance matrix $\mathbb{E}[\mathbf{z}_{RT}^H\mathbf{z}_{RT}]=\rho\mathbf{I}$.
Taking the expectation of \eqref{eq:P-noMC}, we have
\begin{align}
\mathbb{E}\left[P_{\text{NoMC}}^{\text{MiLAC}}\right]
&=\frac{P_TY_0^2}{16}\mathbb{E}\left[\mathbf{z}_{RT}\mathbf{z}_{RT}^H\right]\\
&=\frac{P_TY_0^2\rho}{16}N_T,\label{eq:EP-nomc}
\end{align}
following the symmetry of the Frobenius inner product, the linearity of the trace, and applying our assumption $\mathbb{E}[\mathbf{z}_{RT}^H\mathbf{z}_{RT}]=\rho\mathbf{I}$.
Observe that the received signal power scales with the number of antennas $N_T$, as for a digital \gls{miso} system operating \gls{mrt}.

To analyze the impact of mutual coupling on \gls{milac}-aided systems, we present the following result comparing the average performance of \gls{milac} with and without mutual coupling.
\begin{proposition}
With uncorrelated fading, i.e., when $\mathbb{E}[\mathbf{z}_{RT}^H\mathbf{z}_{RT}]=\rho\mathbf{I}$, mutual coupling between the MiLAC antennas improves the average received signal power, i.e., 
\begin{equation}
\mathbb{E}\left[P_{\text{MC}}^{\text{MiLAC}}\right]\geq\mathbb{E}\left[P_{\text{NoMC}}^{\text{MiLAC}}\right].
\end{equation}
\label{pro:MiLAC}
\end{proposition}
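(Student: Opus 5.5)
The plan is to reduce the claim to a matrix inequality and then apply a standard positive-definite bound. From \eqref{eq:EP-mc3} and \eqref{eq:EP-nomc}, and since the common factor $P_TY_0\rho/16$ is positive, the statement is equivalent to
\begin{equation*}
\text{Tr}\left(\Re\{\mathbf{Z}_{TT}\}^{-1}\right)\geq Y_0N_T=\frac{N_T}{Z_0}.
\end{equation*}
I will write $\mathbf{R}=\Re\{\mathbf{Z}_{TT}\}$, which is real and symmetric by reciprocity. The comparison between the two scenarios is meant to isolate the effect of mutual coupling. I will therefore take the antennas to be perfectly matched in both cases, so the self-impedances are $Z_0$ and $[\mathbf{R}]_{i,i}=Z_0$ for all $i$. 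Only the off-diagonal coupling terms distinguish the two cases. It suffices that the self-resistances equal $Z_0$.

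The next ingredient is that $\mathbf{R}\succ\mathbf{0}$. This is the physical passivity of the transmitting array. The power delivered to the array, $\frac{1}{2}\Re\{\mathbf{i}_x^H\mathbf{Z}_{TT}\mathbf{i}_x\}=\frac{1}{2}\mathbf{i}_x^H\mathbf{R}\,\mathbf{i}_x$, is the radiated (plus dissipated) power and is strictly positive for any nonzero current vector. The paper already uses $\Re\{\mathbf{Z}_{TT}\}^{-1/2}$ and $\Re\{\hat{\mathbf{Y}}_{TT}\}^{-1/2}$, which implicitly assumes this. I expect this step to be the main point needing justification rather than computation: the argument rests on the physical assumption $\mathbf{R}\succ\mathbf{0}$ together with unit-normalized matched self-resistances.

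With $\mathbf{R}\succ\mathbf{0}$, I will show $[\mathbf{R}^{-1}]_{i,i}\geq 1/[\mathbf{R}]_{i,i}$ for each $i$. Apply Cauchy--Schwarz to the vectors $\mathbf{R}^{1/2}\mathbf{e}_i$ and $\mathbf{R}^{-1/2}\mathbf{e}_i$, where $\mathbf{e}_i$ is the $i$th canonical basis vector:
\begin{equation*}
1=\left(\mathbf{e}_i^T\mathbf{e}_i\right)^2\leq\left(\mathbf{e}_i^T\mathbf{R}\mathbf{e}_i\right)\left(\mathbf{e}_i^T\mathbf{R}^{-1}\mathbf{e}_i\right).
\end{equation*}
Summing over $i$ gives $\text{Tr}(\mathbf{R}^{-1})\geq\sum_{i=1}^{N_T}1/[\mathbf{R}]_{i,i}=N_T/Z_0$, which is the required inequality.

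Equality holds if and only if each $\mathbf{e}_i$ is an eigenvector of $\mathbf{R}$, i.e., $\mathbf{R}$ is diagonal. So the average gain is strictly positive whenever the coupling has a nonzero real (resistive) part. If desired, the same conclusion also follows from Jensen's inequality applied to the convex function $1/x$ on the eigenvalues of $\mathbf{R}$, combined with the Schur--Horn majorization of the diagonal by the eigenvalues.
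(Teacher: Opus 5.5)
Your proof is correct and follows the paper's route. Both arguments reduce, via \eqref{eq:EP-mc3} and \eqref{eq:EP-nomc}, to $\text{Tr}(\Re\{\mathbf{Z}_{TT}\}^{-1})\geq Y_0N_T$, using that $\Re\{\mathbf{Z}_{TT}\}$ is positive definite with diagonal entries $Z_0$; the only difference is that the paper cites a lemma from prior work for this trace bound, whereas you prove it directly by Cauchy--Schwarz on $\mathbf{R}^{1/2}\mathbf{e}_i$ and $\mathbf{R}^{-1/2}\mathbf{e}_i$, which also gives the equality condition ($\mathbf{R}$ diagonal).
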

\begin{proof}
Recalling that $\mathbb{E}[P_{\text{MC}}^{\text{MiLAC}}]$ and $\mathbb{E}[P_{\text{NoMC}}^{\text{MiLAC}}]$ are given by \eqref{eq:EP-mc3} and \eqref{eq:EP-nomc}, proving this proposition requires to show that $\text{Tr}(\Re\{\mathbf{Z}_{TT}\}^{-1})\geq Y_0N_T$.
This directly follows from \cite[Lemma~1]{ner26}, which is applicable since $\Re\{\mathbf{Z}_{TT}\}$ is a positive definite matrix with diagonal elements $[\Re\{\mathbf{Z}_{TT}\}]_{n,n}=1/Y_0$, considering perfect antenna matching.
\end{proof}
Interestingly, \gls{milac} can constructively exploit mutual coupling to enhance the received signal power.
It can be interpreted as a reconfigurable matching network that adapts to the channel realization, thereby performing beamforming while exploiting the effects of mutual coupling.

\section{Comparison with Digital Beamforming}
\label{sec:comparison}

Considering a \gls{milac}-aided \gls{miso} system with mutual coupling, we have derived a global optimal solution for the \gls{milac} and characterized its performance limits in closed form.
In this section, we compare the performance of \gls{milac}-aided beamforming with digital beamforming, considering the two cases when the digital transmitter is equipped with a matching network or not, as shown in Fig.~\ref{fig:miso}(b) and (c), respectively.

\subsection{Digital Beamforming With Matching Network}

The considered \gls{miso} system with \gls{milac} at the transmitter in Fig.~\ref{fig:miso}(a) can be compared with an equivalent \gls{miso} system operating digital beamforming with a matching network, as shown in Fig.~\ref{fig:miso}(b).
Following the system model derived in Section~\ref{sec:tx} with $N_R=1$ antenna at the receiver, which is assumed to be perfectly matched, i.e., $z_{RR}=Z_0$, we have that the received signal $z$ writes as
\begin{equation}
z=\mathbf{h}\mathbf{s},
\end{equation}
where $\mathbf{s}$ is the transmitted signal such that $\mathbb{E}[\Vert\mathbf{s}\Vert^2]=P_T$, with $P_T$ being the transmitted signal power, and $\mathbf{h}$ is given as a function of the matching network impedance matrix $\mathbf{Z}_F$ by 
\begin{equation}
\mathbf{h}=\frac{1}{2}\mathbf{z}_{RT}\mathbf{J}_{T}^T\left(\mathbf{Z}_{T}+Z_0\mathbf{I}\right)^{-1},
\end{equation}
where $\mathbf{Z}_{T}$ and $\mathbf{J}_{T}$ are defined in \eqref{eq:ZT} and \eqref{eq:JT}, respectively.
According to \cite[Section~V]{ivr10}, it is convenient to fix the impedance matrix of the matching network to
\begin{equation}
\mathbf{Z}_F=
\begin{bmatrix}
\mathbf{0} & -j\sqrt{Z_0}\Re\{\mathbf{Z}_{TT}\}^{1/2}\\
-j\sqrt{Z_0}\Re\{\mathbf{Z}_{TT}\}^{1/2} & -j\Im\{\mathbf{Z}_{TT}\}
\end{bmatrix},\label{eq:Zmatching}
\end{equation}
as a function of the mutual coupling matrix $\mathbf{Z}_{TT}$, such that all the power is delivered from the generators to the antennas.
Substituting \eqref{eq:Zmatching} into \eqref{eq:ZT} and \eqref{eq:JT}, we obtain $\mathbf{Z}_{T}=Z_0\mathbf{I}$ and $\mathbf{J}_{T}=-j\sqrt{Z_0}\Re\{\mathbf{Z}_{TT}\}^{-1/2}$, yielding
\begin{equation}
\mathbf{h}
=-\frac{j}{4\sqrt{Z_0}}\mathbf{z}_{RT}\Re\{\mathbf{Z}_{TT}\}^{-1/2}.\label{eq:h-mn}
\end{equation}
The received power is maximized with \gls{mrt}, which gives
\begin{align}
P_{\text{MC}}^{\text{Matching}}
&=P_T\left\Vert\mathbf{h}\right\Vert^2\\
&=\frac{P_TY_0}{16}\left\Vert\mathbf{z}_{RT}\Re\{\mathbf{Z}_{TT}\}^{-1/2}\right\Vert^2,\label{eq:P-mn}
\end{align}
where \eqref{eq:P-mn} follows from \eqref{eq:h-mn}.

Comparing the performance of \gls{milac}-aided beamforming in \eqref{eq:P-mc4} with the performance of digital beamforming with a matching network, we readily obtain the following result.
\begin{proposition}
MiLAC achieves the same received power as digital beamforming with matching network for any channel realization $\mathbf{z}_{RT}$, i.e., $P_{\text{MC}}^{\text{MiLAC}}=P_{\text{MC}}^{\text{Matching}}$.
\label{pro:mn}
\end{proposition}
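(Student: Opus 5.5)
The plan is a direct comparison of the two closed-form expressions already derived. The MiLAC optimum is given in \eqref{eq:P-mc4}, and the digital-with-matching optimum is given in \eqref{eq:P-mn}. Both equal $\frac{P_TY_0}{16}\Vert\mathbf{z}_{RT}\Re\{\mathbf{Z}_{TT}\}^{-1/2}\Vert^2$. Since no expectation over $\mathbf{z}_{RT}$ was taken in either derivation, the equality holds for every channel realization.

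To make the argument self-contained rather than a mere comparison of formulas, I would recheck the two ingredients behind each expression.

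For MiLAC, I would confirm the chain of equivalent reformulations \eqref{eq:p2-o}--\eqref{eq:p5-c}. The key facts are:
\begin{itemize}
\item the map \eqref{eq:B-bar} is a bijection between real symmetric $\mathbf{B}$ and real symmetric $\bar{\mathbf{B}}$, because $\Re\{\hat{\mathbf{Y}}_{TT}\}$ is positive definite;
\item the Cayley-type map \eqref{eq:T-bar} is a bijection onto unitary symmetric $\bar{\mathbf{\Theta}}$, up to the boundary case where $\bar{\mathbf{\Theta}}$ has eigenvalue $-1$, which is handled by a supremum/limiting argument as in \cite{ner25-3}.
\end{itemize}
The optimum of \eqref{eq:p5-o1} is then bounded by Cauchy--Schwarz, using that a column of a unitary matrix has unit norm. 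The constructed $\bar{\mathbf{\Theta}}$ attains this bound. I would also verify that $\bar{\mathbf{\Theta}}$ is indeed unitary: this follows from $\bar{\mathbf{v}}^H\bar{\mathbf{v}}=1$, $\bar{\mathbf{V}}^H\bar{\mathbf{v}}=\mathbf{0}$, and $\bar{\mathbf{v}}\bar{\mathbf{v}}^H+\bar{\mathbf{V}}\bar{\mathbf{V}}^H=\mathbf{I}$. The identity \eqref{eq:YYY}, resting on the real/imaginary block-inverse formula, then converts the result to \eqref{eq:P-mc3}.

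For the matching network, I would substitute \eqref{eq:Zmatching} into \eqref{eq:JT} and \eqref{eq:ZT}:
\begin{itemize}
\item Since $\mathbf{Z}_{F,22}+\mathbf{Z}_{TT}=\Re\{\mathbf{Z}_{TT}\}$, we get $\mathbf{J}_T=-j\sqrt{Z_0}\Re\{\mathbf{Z}_{TT}\}^{-1/2}$.
\item Then $\mathbf{Z}_T=\mathbf{0}-\mathbf{J}_T\mathbf{Z}_{F,21}=(-j)^2 Z_0\cdot(-1)\mathbf{I}=Z_0\mathbf{I}$.
\end{itemize}
This yields \eqref{eq:h-mn}. Under the total power constraint $\mathbb{E}[\Vert\mathbf{s}\Vert^2]=P_T$, MRT maximizes $\vert\mathbf{h}\mathbf{s}\vert^2$, again by Cauchy--Schwarz, giving $P_T\Vert\mathbf{h}\Vert^2$. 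Since $\vert{-j}\vert=1$, this equals \eqref{eq:P-mn}.

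The only delicate step is the global optimality on the MiLAC side, namely that the reformulations are genuinely equivalent so that \eqref{eq:P-mc4} is the true maximum and not merely an achievable value. I would lean on \cite{ner26} and \cite{ner25-3} for this. Once both maxima are established, the proposition is an identity between two identical expressions.
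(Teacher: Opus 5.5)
Your proposal is correct and follows the paper's route. The paper proves this proposition simply by noting that the closed-form maxima \eqref{eq:P-mc4} and \eqref{eq:P-mn} are the same expression $\frac{P_TY_0}{16}\Vert\mathbf{z}_{RT}\Re\{\mathbf{Z}_{TT}\}^{-1/2}\Vert^2$ for every $\mathbf{z}_{RT}$, and your re-checks of $\mathbf{J}_T$, $\mathbf{Z}_T$, the unitarity of $\bar{\mathbf{\Theta}}$, and \eqref{eq:YYY} agree with its derivations. As a small aside, the eigenvalue $-1$ issue can be settled exactly rather than by a limit. The matrix $e^{j\psi}\bar{\mathbf{\Theta}}$ is still unitary and symmetric and gives the same objective value in \eqref{eq:p5-o1}, and all but finitely many $\psi$ keep its spectrum away from $-1$, so the maximum is attained.
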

A detailed comparison between \gls{milac}-aided beamforming and digital beamforming with a matching network is reported in Table~\ref{tab}.
Following Proposition~\ref{pro:mn}, these two strategies achieve the same performance for any channel realization, both in the presence or in the absence of mutual coupling.
There are however differences in terms of hardware implementation.
A \gls{milac} only requires one \gls{rf} chain to perform single-stream transmission, which can include low-resolution \glspl{dac} since it only carries the transmitted symbol, lying in a constellation with finite cardinality.
The number of impedance components required by a \gls{milac} serving a single-antenna user is $2N_T+1$, scaling with $\mathcal{O}(N_T)$, because of the optimality of the stem-connected architecture proposed in \cite{ner25-4}, while the number of impedance components in a $2N_T$-port fully-connected matching network scales with $\mathcal{O}(N_T^2)$.
\gls{milac} also does not require any computation at each symbol time, while the price to pay is that its microwave network includes tunable components, which are more challenging to implement than fixed ones.

\subsection{Digital Beamforming Without Matching Network}

We now consider a digital \gls{miso} system without the matching network, as shown in Fig.~\ref{fig:miso}(c).
Following the system model derived in Section~\ref{sec:mimo} with $N_R=1$ antenna at the receiver, which is assumed to be perfectly matched, i.e., $z_{RR}=Z_0$, the received signal $z$ writes as
\begin{equation}
z=\mathbf{h}\mathbf{s},
\end{equation}
where $\mathbf{s}$ is the transmitted signal such that $\mathbb{E}[\Vert\mathbf{s}\Vert^2]=P_T$, with $P_T$ being the transmitted signal power, and $\mathbf{h}$ is the wireless channel given by 
\begin{equation}
\mathbf{h}
=\frac{1}{2}\mathbf{z}_{RT}\left(\mathbf{Z}_{TT}+Z_0\mathbf{I}\right)^{-1},\label{eq:h-digital}
\end{equation}
The received power is maximized with \gls{mrt}, which gives
\begin{align}
P_{\text{MC}}^{\text{Digital}}
&=P_T\left\Vert\mathbf{h}\right\Vert^2\\
&=\frac{P_T}{4}\left\Vert\mathbf{z}_{RT}\left(\mathbf{Z}_{TT}+Z_0\mathbf{I}\right)^{-1}\right\Vert^2.\label{eq:P-digital}
\end{align}
where \eqref{eq:P-digital} follows from \eqref{eq:h-digital}.

The following proposition compares the received signal power of \gls{milac} with that of digital beamforming with no matching network in the presence of mutual coupling.
\begin{proposition}
With mutual coupling, MiLAC achieves higher received signal power than digital beamforming with no matching network for any channel realization $\mathbf{z}_{RT}$, i.e., 
\begin{equation}
P_{\text{MC}}^{\text{MiLAC}}\geq P_{\text{MC}}^{\text{Digital}}.
\end{equation}
\label{pro:digital}
\end{proposition}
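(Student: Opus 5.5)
The plan is to reduce the claim to a matrix inequality that holds for every channel realization. Comparing \eqref{eq:P-mc4} with \eqref{eq:P-digital}, and writing $\mathbf{R}=\Re\{\mathbf{Z}_{TT}\}$, $\mathbf{X}=\Im\{\mathbf{Z}_{TT}\}$ and $\mathbf{M}=\mathbf{Z}_{TT}+Z_0\mathbf{I}$, we must show
\begin{equation}
\frac{Y_0}{16}\mathbf{z}_{RT}\mathbf{R}^{-1}\mathbf{z}_{RT}^H\geq\frac{1}{4}\mathbf{z}_{RT}\mathbf{M}^{-1}\mathbf{M}^{-H}\mathbf{z}_{RT}^H
\end{equation}
for all $\mathbf{z}_{RT}$. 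This is equivalent to $\frac{Y_0}{4}\mathbf{R}^{-1}\succcurlyeq\mathbf{M}^{-1}\mathbf{M}^{-H}$. Here $\mathbf{R}\succ\mathbf{0}$, as already used in the proof of Proposition~\ref{pro:MiLAC}. Since $\mathbf{R}\succ\mathbf{0}$ and $Z_0>0$, $\mathbf{M}$ has positive definite Hermitian part and is therefore invertible. Applying the congruence $\mathbf{M}(\cdot)\mathbf{M}^H$, which preserves the Loewner order, the target becomes
\begin{equation}
\mathbf{M}\mathbf{R}^{-1}\mathbf{M}^H\succcurlyeq 4Z_0\mathbf{I}.
\end{equation}

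The key step is a completing-the-square identity, and this is the one place that needs care. Reciprocity gives $\mathbf{Z}_{TT}=\mathbf{Z}_{TT}^T$, so $\mathbf{R}$ and $\mathbf{X}$ are real symmetric. Hence $\mathbf{M}+\mathbf{M}^H=2(\mathbf{R}+Z_0\mathbf{I})$. Expanding with the shifted matrix $\mathbf{M}-2\mathbf{R}$ gives
\begin{equation}
(\mathbf{M}-2\mathbf{R})\mathbf{R}^{-1}(\mathbf{M}-2\mathbf{R})^H=\mathbf{M}\mathbf{R}^{-1}\mathbf{M}^H-2(\mathbf{M}+\mathbf{M}^H)+4\mathbf{R}=\mathbf{M}\mathbf{R}^{-1}\mathbf{M}^H-4Z_0\mathbf{I}.
\end{equation}
The left-hand side is positive semi-definite because $\mathbf{R}^{-1}\succ\mathbf{0}$. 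Therefore $\mathbf{M}\mathbf{R}^{-1}\mathbf{M}^H\succcurlyeq4Z_0\mathbf{I}$, which yields the proposition.

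The main obstacle is finding the right shift: shifting by $2Z_0\mathbf{I}$ fails because $\mathbf{X}$ and $\mathbf{R}$ need not commute, whereas shifting by $2\mathbf{R}$ makes the cross terms collapse using only the symmetry of $\mathbf{X}$. As a by-product, equality holds exactly when $\mathbf{z}_{RT}\mathbf{M}^{-1}$ lies in the null space of $\mathbf{R}^{-1/2}(\mathbf{M}-2\mathbf{R})^H=\mathbf{R}^{-1/2}(Z_0\mathbf{I}-\mathbf{R}-j\mathbf{X})$. In particular, the two powers coincide for every channel when $\mathbf{Z}_{TT}=Z_0\mathbf{I}$, consistent with Table~\ref{tab}.
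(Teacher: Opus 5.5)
Your proof is correct. It shares the paper's skeleton: reduce to the Loewner inequality $\frac{Y_0}{4}\mathbf{R}^{-1}\succcurlyeq\mathbf{M}^{-1}\mathbf{M}^{-H}$, then write the slack as a Hermitian square built from $\mathbf{Z}_{TT}-Z_0\mathbf{I}$. The middle step is different, though.

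The paper inverts both sides (order reversal for positive definite matrices) to get $\mathbf{M}^H\mathbf{M}\succcurlyeq4Z_0\Re\{\mathbf{Z}_{TT}\}$. It then expands using $\mathbf{Z}_{TT}+\mathbf{Z}_{TT}^H=2\Re\{\mathbf{Z}_{TT}\}$, which is the same use of reciprocity you make explicit and the paper leaves implicit. The slack comes out as the unweighted square $(\mathbf{Z}_{TT}-Z_0\mathbf{I})^H(\mathbf{Z}_{TT}-Z_0\mathbf{I})$, so the ``right shift'' you flag as the main obstacle never arises.

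Your congruence by $\mathbf{M}$ keeps $\mathbf{R}^{-1}$ in the middle, and that is why you need the shift by $2\mathbf{R}$. Since $\mathbf{M}-2\mathbf{R}=(Z_0\mathbf{I}-\mathbf{Z}_{TT})^H$, your square is $(Z_0\mathbf{I}-\mathbf{Z}_{TT})^H\mathbf{R}^{-1}(Z_0\mathbf{I}-\mathbf{Z}_{TT})$. The two targets are equivalent: with $\mathbf{N}=\mathbf{M}\mathbf{R}^{-1/2}$ they read $\mathbf{N}\mathbf{N}^H\succcurlyeq4Z_0\mathbf{I}$ and $\mathbf{N}^H\mathbf{N}\succcurlyeq4Z_0\mathbf{I}$, and both say every singular value of $\mathbf{N}$ is at least $2\sqrt{Z_0}$.

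What your route buys is an exact identity for the gap, $P_{\text{MC}}^{\text{MiLAC}}-P_{\text{MC}}^{\text{Digital}}=\frac{P_TY_0}{16}\Vert\mathbf{z}_{RT}\mathbf{M}^{-1}(Z_0\mathbf{I}-\mathbf{Z}_{TT})^H\mathbf{R}^{-1/2}\Vert^2$. This gives a necessary and sufficient per-channel equality condition, whereas the paper only notes that $\mathbf{Z}_{TT}=Z_0\mathbf{I}$ suffices.

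One notational slip: it is the column vector $(\mathbf{z}_{RT}\mathbf{M}^{-1})^H$, not the row vector $\mathbf{z}_{RT}\mathbf{M}^{-1}$, that must lie in that null space. So the condition reads $(\mathbf{Z}_{TT}-Z_0\mathbf{I})(\mathbf{z}_{RT}\mathbf{M}^{-1})^H=\mathbf{0}$.
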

\begin{proof}
Recalling that $P_{\text{MC}}^{\text{MiLAC}}$ and $P_{\text{MC}}^{\text{Digital}}$ are given by \eqref{eq:P-mc4} and \eqref{eq:P-digital}, proving this proposition requires to show that
\begin{equation}
\frac{Y_0}{4}\left\Vert\mathbf{z}_{RT}\Re\{\mathbf{Z}_{TT}\}^{-1/2}\right\Vert^2\geq\left\Vert\mathbf{z}_{RT}\left(\mathbf{Z}_{TT}+Z_0\mathbf{I}\right)^{-1}\right\Vert^2,
\end{equation}
for any $\mathbf{z}_{RT}\in\mathbb{C}^{1\times N_T}$, which is the same as verifying the matrix inequality
\begin{equation}
\frac{Y_0}{4}\Re\{\mathbf{Z}_{TT}\}^{-1}\succcurlyeq\left(\left(\mathbf{Z}_{TT}+Z_0\mathbf{I}\right)^H\left(\mathbf{Z}_{TT}+Z_0\mathbf{I}\right)\right)^{-1}.
\end{equation}
Since for $\mathbf{A}$ and $\mathbf{B}$ positive definite, $\mathbf{A}\succcurlyeq\mathbf{B}$ is the same as $\mathbf{B}^{-1}\succcurlyeq\mathbf{A}^{-1}$, our condition becomes
\begin{equation}
\left(\mathbf{Z}_{TT}+Z_0\mathbf{I}\right)^H\left(\mathbf{Z}_{TT}+Z_0\mathbf{I}\right)\succcurlyeq4Z_0\Re\{\mathbf{Z}_{TT}\},
\end{equation}
which can be equivalently rewritten as
\begin{equation}
\mathbf{Z}_{TT}^H\mathbf{Z}_{TT}+2Z_0\Re\{\mathbf{Z}_{TT}\}+Z_0^2\mathbf{I}\succcurlyeq4Z_0\Re\{\mathbf{Z}_{TT}\}.
\end{equation}
Since saying $\mathbf{A}\succcurlyeq\mathbf{B}$ is the same as saying $\mathbf{A}-\mathbf{B}\succcurlyeq\mathbf{0}$, our condition becomes
\begin{equation}
\mathbf{Z}_{TT}^H\mathbf{Z}_{TT}-2Z_0\Re\{\mathbf{Z}_{TT}\}+Z_0^2\mathbf{I}\succcurlyeq\mathbf{0},
\end{equation}
which can be equivalently rewritten as
\begin{equation}
\left(\mathbf{Z}_{TT}-Z_0\mathbf{I}\right)^H\left(\mathbf{Z}_{TT}-Z_0\mathbf{I}\right)\succcurlyeq\mathbf{0},
\end{equation}
which is always satisfied since $\left(\mathbf{Z}_{TT}-Z_0\mathbf{I}\right)^H\left(\mathbf{Z}_{TT}-Z_0\mathbf{I}\right)$ is positive semi-definite.
\end{proof}

The equality in Proposition~\eqref{pro:digital} is satisfied when $\mathbf{Z}_{TT}=Z_0\mathbf{I}$, i.e., with perfect matching and no mutual coupling.
This implies that \gls{milac} performs as digital beamforming in a \gls{miso} system with perfect matching and no mutual coupling, in line with the results in \cite{ner25-3}.
Finally, the average performance $\mathbb{E}[P_{\text{MC}}^{\text{Digital}}]$ in the case $\mathbf{z}_{RT}$ has covariance matrix $\mathbb{E}[\mathbf{z}_{RT}^H\mathbf{z}_{RT}]=\rho\mathbf{I}$, i.e., in the presence of uncorrelated fading with path gain $\rho$, is readily given by taking the expectation of \eqref{eq:P-digital} as
\begin{equation}
\mathbb{E}\left[P_{\text{MC}}^{\text{Digital}}\right]
=\frac{P_T\rho}{4}\text{Tr}\left(\left(\left(\mathbf{Z}_{TT}+Z_0\mathbf{I}\right)^H\left(\mathbf{Z}_{TT}+Z_0\mathbf{I}\right)\right)^{-1}\right).
\end{equation}
The comprehensive comparison between \gls{milac}-aided and digital systems is summarized in Table~\ref{tab}, which shows that the only benefits of digital beamforming with no matching network is that it does not require any \gls{rf} component, and therefore reduced losses are expected in practical systems.

\section{Numerical Results}
\label{sec:results}

This section provides simulation results to validate the derived global optimal closed-form solutions and performance bounds for \gls{milac} with mutual coupling.
Consider a \gls{miso} system with \gls{milac} at the transmitter.
The antenna array of the \gls{milac} is a \gls{upa} of antennas located in the $x$-$y$ plane, with dimensions $N_X\times N_Y$, where $N_X=8$ and $N_Y=N_T/8$, and with antenna spacing $d$.
The antennas are thin wire dipoles parallel to the $y$ axis with length $\ell=\lambda/4$ and radius $r\ll\ell$, where $\lambda=c/f$ is the wavelength of the frequency $f=28$~GHz, and $c$ is the speed of light.
All the antennas are assumed to be perfectly matched to $Z_0=50\;\Omega$, giving $\left[\mathbf{Z}_{TT}\right]_{n,n}=Z_0$, for $n=1,\ldots,N_T$.
In addition, the $(q,p)$th entry of $\mathbf{Z}_{TT}$, with $q\neq p$, represent the mutual coupling between the antenna $p$ located in $(x_p,y_p)$ and the antenna $q$ located in $(x_q,y_q)$.
Following \cite{gra21}, the entry $\left[\mathbf{Z}_{TT}\right]_{q,p}=\left[\mathbf{Z}_{TT}\right]_{p,q}$ is modeled as
\begin{multline}
\left[\mathbf{Z}_{TT}\right]_{q,p}=
\int_{y_q-\frac{\ell}{2}}^{y_q+\frac{\ell}{2}}
\int_{y_p-\frac{\ell}{2}}^{y_p+\frac{\ell}{2}}
\frac{j\eta_0}{4\pi k_0}
\left(\frac{\left(y''-y'\right)^2}{d_{q,p}^2}\right.\\
\left.\times\left(\frac{3}{d_{q,p}^2}+\frac{3jk_0}{d_{q,p}}-k_0^2\right)
-\frac{jk_0+d_{q,p}^{-1}}{d_{q,p}}+k_0^2\right)
\frac{e^{-jk_0d_{q,p}}}{d_{q,p}}\\
\times\frac{
\sin\left(k_0\left(\frac{\ell}{2}-\left\vert y'-y_p\right\vert\right)\right)
\sin\left(k_0\left(\frac{\ell}{2}-\left\vert y''-y_q\right\vert\right)\right)}
{\sin^2\left(k_0\frac{\ell}{2}\right)}dy'dy'',
\end{multline}
where $\eta_0=377\;\Omega$ is the impedance of free space, $k_0=2\pi/\lambda$ is the wavenumber, and $d_{q,p}=\sqrt{(x_q-x_p)^2+(y''-y')^2}$.
We generate $\mathbf{z}_{RT}$ as independent Rayleigh distributed with unit path gain, i.e., $\mathbf{z}_{RI}\sim\mathcal{CN}(\mathbf{0},\mathbf{I})$.
We consider unit path gain and unit transmit signal power, i.e., $\rho=1$ and $P_T=1$, for simplicity, since they impact the received signal power by just scaling it by a constant factor.

\begin{figure}[t]
\centering
\includegraphics[height=0.27\textwidth]{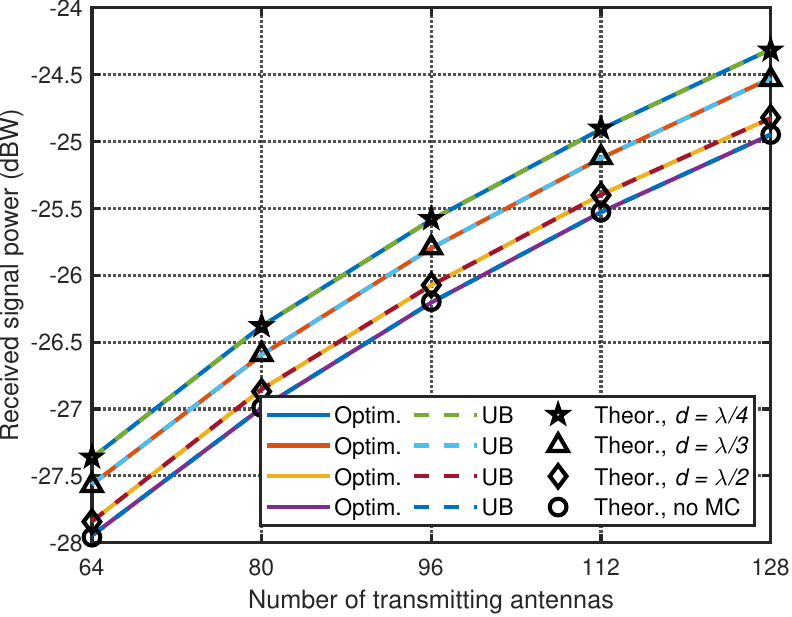}
\caption{Received signal power versus the number of transmitting antennas achieved by MiLAC.}
\label{fig:aware}
\end{figure}

In Fig.~\ref{fig:aware}, we report the received signal power obtained by \gls{milac} with mutual coupling for different values of antenna spacing $d\in[\lambda/2,\lambda/3,\lambda/4]$, and without mutual coupling, i.e., with $\mathbf{Z}_{TT}=Z_0\mathbf{I}$.
We compare the following three baselines.
\begin{itemize}
\item Optim.: The average received signal power resulting from Monte Carlo simulations, where at each channel realization the \gls{milac} is optimized with the proposed algorithm in Section~\ref{sec:opt}.
\item UB: The average received signal power resulting from Monte Carlo simulations, where at each channel realization the received signal power is given by its upper bound $P_{\text{MC}}^{\text{MiLAC}}=P_TY_0\Vert\mathbf{z}_{RT}\Re\{\mathbf{Z}_{TT}\}^{-1/2}\Vert^2/16$ or $P_{\text{NoMC}}^{\text{MiLAC}}=P_TY_0^2\Vert\mathbf{z}_{RT}\Vert^2/16$.
\item Theor.: The theoretical average received signal power given by $\mathbb{E}[P_{\text{MC}}^{\text{MiLAC}}]=P_TY_0\rho\text{Tr}(\Re\{\mathbf{Z}_{TT}\}^{-1})/16$ or $\mathbb{E}[P_{\text{NoMC}}^{\text{MiLAC}}]=P_TY_0^2\rho N_T/16$.
\end{itemize}

We make the following four observations from Fig.~\ref{fig:aware}.
\textit{First}, the \gls{milac} optimized with the proposed solution exactly achieves the received signal power upper bound, confirming the effectiveness of our global optimal closed-form solution.
\textit{Second}, the average received signal power obtained with Monte Carlo simulations exactly corresponds to the theoretical average derived in closed form, confirming the validity of our closed-form scaling laws.
\textit{Third}, the presence of mutual coupling allows the \gls{milac} to achieve higher received signal power than with no mutual coupling, confirming Proposition~\ref{pro:MiLAC}.
A similar trend was also analytically proved for \gls{ris}-aided systems in \cite{ner26}.
\textit{Fourth}, when $d=\lambda$, the mutual coupling is so weak that the achieved performance is approximately the same as in the absence of mutual coupling.

We extend the comparison to include systems where the \gls{milac} is optimized without accounting for mutual coupling.
In Fig.~\ref{fig:unaware}, we report the received signal power achieved by the \gls{milac}, when it is optimized through mutual coupling aware as well as unaware algorithms, for different numbers of antennas $N_T\in[64,96,128]$.
For \gls{milac} reconfigured with mutual coupling aware optimization, the performance is given by the received signal power upper bound $P_{\text{MC}}^{\text{MiLAC}}=P_TY_0\Vert\mathbf{z}_{RT}\Re\{\mathbf{Z}_{TT}\}^{-1/2}\Vert^2/16$ or $P_{\text{NoMC}}^{\text{MiLAC}}=P_TY_0^2\Vert\mathbf{z}_{RT}\Vert^2/16$.
For \gls{milac} optimized in a mutual coupling unaware fashion, we assume that $\mathbf{Z}_{II}=Z_0\mathbf{I}$ during the optimization phase, and optimize the \gls{milac} by using the solution proposed in Section~\ref{sec:opt-unaware}.
From the obtained $\boldsymbol{\Theta}$, the susceptance matrix $\mathbf{B}$ is obtained by inverting \eqref{eq:T}, and it is plugged into the channel model with mutual coupling in \eqref{eq:h}-\eqref{eq:f} to get the received signal power of a \gls{milac}-aided system with mutual coupling unaware optimization.

We make the following three remarks from Fig.~\ref{fig:unaware}.
\textit{First}, when optimizing the \gls{milac} accounting for mutual coupling, the performance increases with the mutual coupling strength, i.e., as the antenna spacing decreases.
This is because \gls{milac} can effectively deliver the power to the coupled antennas, as a matching network.
However, when the \gls{milac} is optimized not being aware of the mutual coupling, its performance dramatically drops for small values of antenna spacing, experiencing a degradation of up to 3~dB.
This is because the \gls{milac} is optimized based on a channel model with no mutual coupling, which differs from the physics-compliant channel model with the mutual coupling effects.
\textit{Second}, the impact of mutual coupling can be approximately neglected during the optimization when the antenna spacing is larger than half-wavelength, i.e., $d\geq\lambda/2$.
The \gls{milac} can be successfully optimized without considering mutual coupling in this case, given how weak the mutual coupling is.
\textit{Third}, the above remarks are independent of the number of antennas $N_T$.

\begin{figure}[t]
\centering
\includegraphics[height=0.27\textwidth]{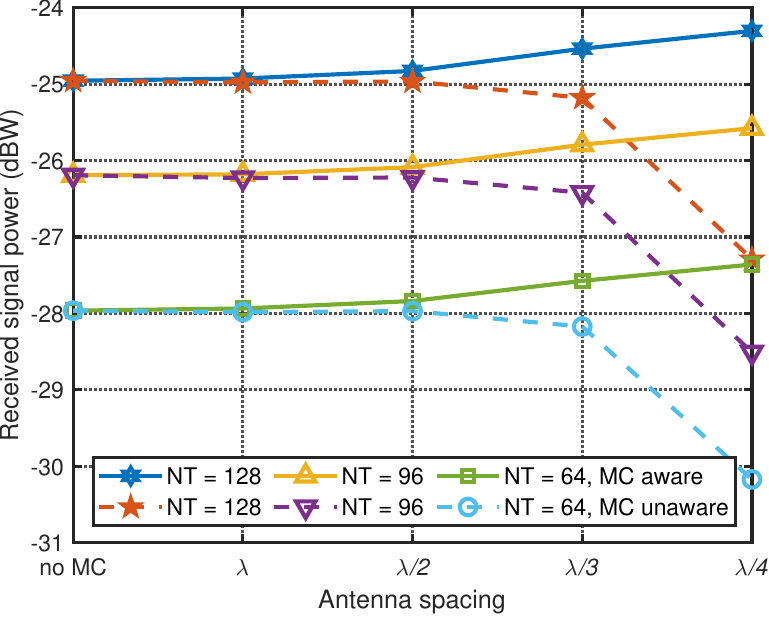}
\caption{Received signal power versus the antenna spacing achieved by MiLAC.}
\label{fig:unaware}
\end{figure}

\begin{figure}[t]
\centering
\includegraphics[height=0.27\textwidth]{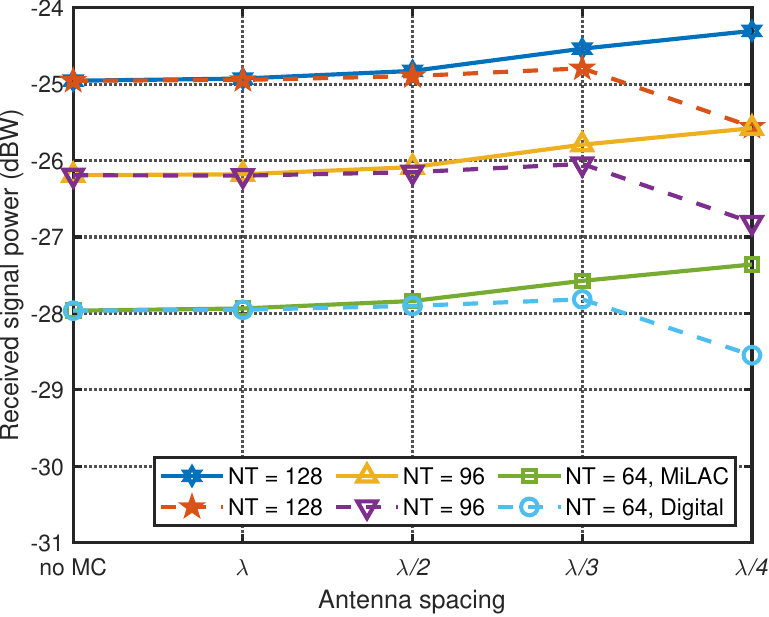}
\caption{Received signal power versus the antenna spacing achieved by MiLAC and digital beamforming with no matching network.}
\label{fig:digital}
\end{figure}

We finally compare \gls{milac} with digital beamforming with no matching network in the presence of mutual coupling.
In Fig.~\ref{fig:digital}, we report the received signal power achieved by \gls{milac} (optimized through the proposed globally optimal solution) and by digital beamforming (assuming \gls{mrt}, which is optimal in \gls{miso}).
For \gls{milac}, the performance is given by the received signal power upper bound $P_{\text{MC}}^{\text{MiLAC}}=P_TY_0\Vert\mathbf{z}_{RT}\Re\{\mathbf{Z}_{TT}\}^{-1/2}\Vert^2/16$ or $P_{\text{NoMC}}^{\text{MiLAC}}=P_TY_0^2\Vert\mathbf{z}_{RT}\Vert^2/16$.
For digital beamforming, the performance is given by the corresponding received signal power upper bound $P_{\text{MC}}^{\text{Digital}}=P_T\Vert\mathbf{z}_{RT}(\mathbf{Z}_{TT}+Z_0\mathbf{I})^{-1}\Vert^2/4$ or $P_{\text{NoMC}}^{\text{Digital}}=P_TY_0^2\Vert\mathbf{z}_{RT}\Vert^2/16$.
From Fig.~\ref{fig:digital}, we observe that \gls{milac}-aided beamforming outperforms digital beamforming in the presence of mutual coupling, validating Proposition~\ref{pro:digital}.
A \gls{milac} can outperform digital beamforming since it acts as a reconfigurable matching network which can simultaneously maximize the power flow from the \gls{rf} chain to the antennas and perform beamforming.
The discrepancy between the two beamforming strategies increases with the mutual coupling strength (i.e., as the antenna spacing decreases), while it vanishes in the absence of mutual coupling, in agreement with \cite{ner25-3}.

\section{Conclusion}

We have developed a rigorous physics-compliant modeling for \gls{milac}-aided \gls{mimo} systems by leveraging multiport network theory.
Specifically, we have derived end-to-end system models for conventional digital \gls{mimo} and \gls{milac}-aided systems, considering systems where the \gls{milac} is deployed at the transmitter, the receiver, or both.
The resulting models reveal how mutual coupling between the antennas fundamentally alters the channel and the precoding and combining matrices implemented by the \gls{milac}.

Building on the developed models, we have addressed the optimization of \glspl{milac} in the presence of mutual coupling.
For a \gls{milac}-aided \gls{miso} system, we have derived a closed-form solution that maximizes the received signal power under lossless and reciprocal constraints, and proved its global optimality.
We have further obtained closed-form expressions for the maximum achievable performance with and without mutual coupling, thereby showing that mutual coupling effects improve the performance in \gls{milac}-aided systems.
In addition, we have compared the performance of \gls{milac}-aided with digital \gls{mimo} systems, with or without a matching network.
Our analytical derivations show that \gls{milac} always performs as digital beamforming with a matching network, and always outperforms digital beamforming without a matching network.
Numerical results validate our theoretical findings and confirm that the proposed \gls{milac} optimization achieves the derived performance upper bounds, visibly outperforming digital \gls{mimo} systems when they are not equipped with a matching network.

\bibliographystyle{IEEEtran}
\bibliography{IEEEabrv,main}

\end{document}